\newcommand{\doi}[1]{\url{http://dx.doi.org/#1}}
\DeclarePairedDelimiter\bra{\langle}{\rvert}
\DeclarePairedDelimiter\ket{\lvert}{\rangle}
\tikzstyle{env}=[copoint,regular polygon rotate=0,minimum width=0.2cm, fill=black]
\definecolor{zx_red}{RGB}{232, 165, 165}
\definecolor{zx_green}{RGB}{216, 248, 216}
\tikzstyle{every picture}=[baseline=-0.25em]
\tikzstyle{dotpic}=[scale=0.5]
\tikzstyle{diredges}=[every to/.style={diredge}]
\tikzstyle{dot graph}=[shorten <=-0.1mm,shorten >=-0.1mm,scale=0.6]
\tikzstyle{plot point}=[circle,fill=black,minimum width=2mm,inner sep=0]
\tikzstyle{braceedge}=[decorate,decoration={brace,amplitude=2mm,raise=-1mm}]
\tikzstyle{small braceedge}=[decorate,decoration={brace,amplitude=1mm,raise=-1mm}]
\tikzstyle{left hook arrow}=[left hook-latex]
\tikzstyle{right hook arrow}=[right hook-latex]
\tikzstyle{dtriangle}=[fill=yellow,draw=black,shape=isosceles triangle,shape border rotate=-90,isosceles triangle stretches=true,inner sep=0.8pt,minimum width=0.25cm,minimum height=2mm]
\tikzstyle{vtriang}=[fill=yellow,draw=black,shape=isosceles triangle,shape border rotate=180,isosceles triangle stretches=true,inner sep=0.8pt,minimum width=0.25cm,minimum height=2mm]
\tikzstyle{trigmc}=[fill=zx_green,draw=black,shape=isosceles triangle,shape border rotate=90,isosceles triangle stretches=true,inner sep=0.8pt,minimum width=0.3cm,minimum height=2mm]
\tikzstyle{vrt}=[fill=yellow,draw=black,shape=isosceles triangle,shape border rotate=0,isosceles triangle stretches=true,inner sep=0.8pt,minimum width=0.25cm,minimum height=2mm]
\tikzstyle{H box}=[rectangle,fill=yellow,draw=black,xscale=0.8,yscale=0.8, inner sep=0.6pt]
\tikzstyle{gn_phase}=[minimum size=1.2em, font={\footnotesize\boldmath}, shape=rectangle, rounded corners=0.5em, inner sep=0.2em, outer sep=-0.2em, scale=0.8, draw=black, fill=zx_green]
\tikzstyle{gbox}=[rectangle,fill=zx_green,draw=black, xscale=1.0, yscale=1.0, inner sep=1.pt, shape=rectangle]
\tikzstyle{rbox}=[rectangle,fill=red,draw=black,xscale=1.0,yscale=1.0, inner sep=1.pt]
\tikzstyle{rn_phase}=[{gn_phase}, fill=zx_red, draw=black]
\tikzstyle{zhbx}=[rectangle,fill=white,draw=black,xscale=1.0,yscale=1.0, inner sep=1.6pt]
\tikzstyle{newh}=[rectangle,fill=yellow,draw=black,xscale=2.0,yscale=2.0, inner sep=1.6pt]
\tikzstyle{triangle}=[fill=yellow,draw=black,shape=isosceles triangle,shape border rotate=90,isosceles triangle stretches=true,inner sep=0.8pt,minimum width=0.25cm,minimum height=2mm]
\tikzstyle{rtriang}=[shape=isosceles triangle, fill=yellow, draw=black, isosceles triangle stretches=true, inner sep=0.8pt, minimum width=0.25cm, minimum height=2mm]
\tikzstyle{utriang}=[rtriang, shape=isosceles triangle, fill=yellow, draw=black, shape border rotate=90]
\tikzstyle{dtriang}=[rtriang, shape=isosceles triangle, fill=yellow, draw=black, shape border rotate=-90]
\tikzstyle{box}=[draw=black, shape=rectangle, fill={white}, minimum size=.95em, inner sep=0.15em, scale=0.85, font={\footnotesize}]
\tikzstyle{bn}=[circle,fill=black,draw=black,scale=.4]
\tikzstyle{wn}=[circle,fill=white,draw=black,scale=.6]
\tikzstyle{dn}=[circle,fill=none,draw=gray]
\tikzstyle{bspider}=[fill=black,draw=black,scale=1,shape=isosceles triangle,shape border rotate=-90,isosceles triangle stretches=true,inner sep=1pt,minimum width=0.4cm,minimum height=3mm]
\tikzstyle{dbspider}=[fill=black,draw=black,scale=1,shape=isosceles triangle,shape border rotate=90,isosceles triangle stretches=true,inner sep=1pt,minimum width=0.4cm,minimum height=3mm]
\tikzstyle{L}=[rectangle,shape=rectangle,fill=green,draw=black]
\tikzstyle{Z dot}=[inner sep=0mm, minimum size=2mm, shape=circle, draw=black, fill={rgb,255: red,221; green,255; blue,221}]
\tikzstyle{Z phase dot}=[minimum size=5mm, font={\footnotesize\boldmath}, shape=rectangle, rounded corners=2mm, inner sep=0.2mm, outer sep=-2mm, scale=0.8, draw=black, fill={rgb,255: red,221; green,255; blue,221}]
\tikzstyle{X dot}=[Z dot, shape=circle, draw=black, fill={rgb,255: red,255; green,136; blue,136}]
\tikzstyle{X phase dot}=[Z phase dot, fill={rgb,255: red,255; green,136; blue,136}, font={\footnotesize\boldmath}]
\tikzstyle{hadamard edge}=[-, dashed, dash pattern=on 2pt off 0.5pt, thick, draw={rgb,255: red,68; green,136; blue,255}]
\tikzstyle{black dot}=[inner sep=0.7mm,minimum width=0pt,minimum height=0pt,fill=black,draw=black,shape=circle]
\tikzstyle{dot}=[black dot]
\tikzstyle{smalldot}=[inner sep=0.4mm,minimum width=0pt,minimum height=0pt,fill=black,draw=black,shape=circle]
\tikzstyle{white dot}=[dot,fill=white]
\tikzstyle{antipode}=[white dot,inner sep=0.3mm,font=\footnotesize]
\tikzstyle{smallwhitedot}=[smalldot,fill=white]
\tikzstyle{alt white dot}=[white dot,label={[xshift=3.07mm,yshift=-0.05mm,font=\footnotesize]left:$*$}]
\tikzstyle{gray dot}=[dot,fill=gray!40!white]
\tikzstyle{smallgraydot}=[smalldot,fill=gray!40!white]
\tikzstyle{box vertex}=[draw=black,rectangle]
\tikzstyle{small box}=[box vertex,fill=white]
\tikzstyle{whitebg}=[fill=white,inner sep=2pt]
\tikzstyle{graph state vertex}=[sg vertex,fill=black]
\tikzstyle{wide copoint}=[fill=white,draw=black,shape=isosceles triangle,shape border rotate=90,isosceles triangle stretches=true,inner sep=1pt,minimum width=1.5cm,minimum height=5mm]
\tikzstyle{wide point}=[fill=white,draw=black,shape=isosceles triangle,shape border rotate=-90,isosceles triangle stretches=true,inner sep=1pt,minimum width=1.5cm,minimum height=4mm]
\tikzstyle{very wide copoint}=[fill=white,draw=black,shape=isosceles triangle,shape border rotate=-90,isosceles triangle stretches=true,inner sep=1pt,minimum width=2.5cm,minimum height=4mm]
\tikzstyle{very wide empty copoint}=[draw=black,shape=isosceles triangle,shape border rotate=-90,isosceles triangle stretches=true,inner sep=1pt,minimum width=2.5cm,minimum height=4mm]
\tikzstyle{symm}=[ultra thick,shorten <=-1mm,shorten >=-1mm]
\tikzstyle{square box}=[rectangle,fill=white,draw=black,minimum height=5mm,minimum width=5mm,font=\small]
\tikzstyle{square gray box}=[rectangle,fill=gray!30,draw=black,minimum height=6mm,minimum width=6mm]
\tikzstyle{copoint}=[regular polygon,regular polygon sides=3,draw=black,scale=0.75,inner sep=-0.5pt,minimum width=7mm,fill=white]
\tikzstyle{point}=[regular polygon,regular polygon sides=3,draw=black,scale=0.75,inner sep=-0.5pt,minimum width=7mm,fill=white,regular polygon rotate=180]
\tikzstyle{gray point}=[point,fill=gray!40!white]
\tikzstyle{gray copoint}=[copoint,fill=gray!40!white]
\newcommand{\edgearrow}{{\arrow[black]{>}}}
\newcommand{\edgetick}{{\arrow[black,scale=0.7,very thick]{|}}}
\tikzstyle{diredge}=[->]
\tikzstyle{rdiredge}=[<-]
\tikzstyle{medium diredge}=[->]
\tikzstyle{short diredge}=[->]
\tikzstyle{halfedge}=[-)]
\tikzstyle{other halfedge}=[(-]
\tikzstyle{freeedge}=[(-)]
\tikzstyle{white edge}=[line width=5pt,white]
\tikzstyle{tick}=[postaction=decorate,decoration={markings, mark=at position 0.5 with \edgetick}]
\tikzstyle{small map edge}=[|-latex, gray!60!blue, shorten <=0.9mm, shorten >=0.5mm]
\tikzstyle{thick dashed edge}=[very thick,dashed,gray!40]
\tikzstyle{map edge}=[|-latex,very thick, gray!40, shorten <=1mm, shorten >=0.5mm]
\tikzstyle{tickedge}=[postaction=decorate,
\tikzstyle{dirtickedge}=[postaction=decorate,
\tikzstyle{dirdoubletickedge}=[postaction=decorate,
\newcommand{\boxshape}[3]{%
\pgfdeclareshape{#1}{
\inheritsavedanchors[from=rectangle] 
\inheritanchorborder[from=rectangle]
\inheritanchor[from=rectangle]{center}
\inheritanchor[from=rectangle]{north}
\inheritanchor[from=rectangle]{south}
\inheritanchor[from=rectangle]{west}
\inheritanchor[from=rectangle]{east}
\backgroundpath{
\southwest \pgf@xa=\pgf@x \pgf@ya=\pgf@y
\northeast \pgf@xb=\pgf@x \pgf@yb=\pgf@y

\@tempdima=#2
\@tempdimb=#3

\pgfpathmoveto{\pgfpoint{\pgf@xa - 5pt + \@tempdima}{\pgf@ya}}
\pgfpathlineto{\pgfpoint{\pgf@xa - 5pt - \@tempdima}{\pgf@yb}}
\pgfpathlineto{\pgfpoint{\pgf@xb + 5pt + \@tempdimb}{\pgf@yb}}
\pgfpathlineto{\pgfpoint{\pgf@xb + 5pt - \@tempdimb}{\pgf@ya}}
\pgfpathlineto{\pgfpoint{\pgf@xa - 5pt + \@tempdima}{\pgf@ya}}
\pgfpathclose
}
}}
\tikzstyle{map}=[draw,shape=NEbox,inner sep=7pt]
\tikzstyle{mapdag}=[draw,shape=SEbox,inner sep=7pt]
\tikzstyle{maptrans}=[draw,shape=SWbox,inner sep=7pt]
\tikzstyle{mapconj}=[draw,shape=NWbox,inner sep=7pt]
\tikzstyle{probs}=[shape=semicircle,fill=gray!40!white,draw=black,shape border rotate=180,minimum width=1.2cm]
\tikzstyle{arrs}=[-latex,font=\small,auto]
\tikzstyle{arrow plain}=[arrs]
\tikzstyle{arrow dashed}=[dashed,arrs]
\tikzstyle{arrow bold}=[very thick,arrs]
\tikzstyle{arrow hide}=[draw=white!0,-]
\tikzstyle{arrow reverse}=[latex-]
\tikzstyle{cdnode}=[]
\tikzstyle{gn}=[dot,fill=zx_green,minimum width=0.25cm,inner sep=0pt]
\tikzstyle{rno}=[dot,fill=red,inner sep=0pt,minimum width=0.25cm]
\tikzstyle{rn}=[dot,fill=zx_red,inner sep=0pt,minimum width=0.25cm]
\tikzstyle{rc}=[dot,thick,fill=white,draw = red,minimum width=0.3cm,inner sep=0pt]
\tikzstyle{gc}=[dot,thick,fill=white,draw=zx_green,inner sep=0pt,minimum width=0.3cm]
\tikzstyle{bc}=[dot,thick,fill=white,draw= blue,minimum width=0.3cm]
\tikzstyle{label}=[circle,fill=white,minimum width=0.3cm]
\tikzstyle{clocklabel}=[dot,fill=yellow,draw=black,font=\tiny,inner sep=0.75pt]
\tikzstyle{rsn}=[circle split,draw,fill=red,font=\tiny,inner sep=0.75pt]
\tikzstyle{gsn}=[circle split,draw,fill=zx_green,font=\tiny,inner sep=0.75pt]
\tikzstyle{bsn}=[circle split,draw,fill=blue,font=\tiny,inner sep=0.75pt]
\tikzstyle{rsc}=[circle split,thick,draw= red,draw,fill=white,font=\tiny,inner sep=0.75pt]
\tikzstyle{gsc}=[circle split,thick,draw=zx_green,draw,fill=white,font=\tiny,inner sep=0.75pt]
\tikzstyle{bsc}=[circle split,thick,draw= blue,draw,fill=white,font=\tiny,inner sep=0.75pt]
\tikzstyle{cnot}=[fill=white,shape=circle,inner sep=-1.4pt]
\tikzstyle{wire label}=[font=\tiny, auto]
\tikzstyle{cdiag}=[matrix of math nodes, row sep=3em, column sep=3em, text height=1.5ex, text depth=0.25ex,inner sep=0.5em]
\tikzstyle{arrow above}=[transform canvas={yshift=0.5ex}]
\tikzstyle{arrow below}=[transform canvas={yshift=-0.5ex}]
\newtheorem{Th}{Theorem}[section]
\newtheorem{theorem}[Th]{Theorem}
\newtheorem{proposition}[Th]{Proposition} 
\newtheorem{lemma}[Th]{Lemma}
\newtheorem{example}[Th]{Example}
\newtheorem{remark}[Th]{Remark}
\newcommand{\vast}{\bBigg@{6.5}}
\newcommand{\vertrule}[1][1ex]{\rule{.4pt}{#1}}
\title{Representing and Implementing Matrices Using Algebraic ZX-calculus}
\author{Quanlong Wang$^{1}$ \quad\quad Richie Yeung$^{1,2}$\\
$^{1}$Quantinuum, 17 Beaumont Street, Oxford OX1 2NA, United Kingdom\\ $^{2}$University of Oxford}
\begin{document}
\date{}\maketitle
\begin{abstract}
In linear algebra applications, elementary matrices hold a significant role. This paper presents a diagrammatic representation of all $2^m\times 2^n$-sized elementary matrices in algebraic ZX-calculus, showcasing their properties on inverses and transpose through diagrammatic rewriting. Additionally, the paper uses this representation to depict the Jozsa-style matchgate in algebraic ZX-calculus. To further enhance practical use, we have implemented this representation in \texttt{discopy}. Overall, this work sets the groundwork for more applications of ZX-calculus such as synthesising controlled matrices \cite{shaikhHowSumExponentiate2022} in quantum computing.
\end{abstract}

 
  \section{ Introduction}
 
Matrices are used everywhere in modern science, like machine learning \cite{machunelearn} or quantum computing \cite{Nielsen}, to name a few. Meanwhile, there is a graphical language called ZX-calculus  that could also deal with matrix calculations such as matrix multiplication and tensor product \cite{CoeckeDuncan,Chkw2021}. Then  there naturally arises a  question: why are people bothering with using diagrams for matrix calculations given that matrix technology has been applied with great successes? There are a few reasons for doing so. First, there is a lot of redundancy in matrix calculations which could be avoided in graphical calculus. For example,  to prove the cyclic property of matrices $tr(AB)=tr(BA)$, all the elements of the two matrices will be involved, while in graphical language like ZX-calculus, the proof of  the cyclic property is almost a tautology \cite{Coeckebk}. Second, matrix calculations always have all the elements of matrices involved, thus a ``global'' operation, while in ZX-calculus, the operations are just diagram rewriting where only a part of a diagram is replaced by another sub-diagram  according to certain rewriting rule, thus essentially a ``local" operation which makes things much easier. Finally, graphical calculus is much more intuitive than matrix calculation,  therefore a pattern/structure is more probably to be recognised in a graphical formalism. In fact,  as a graphical calculus for matrix calculation, ZX-calculus has achieved plenty of successes in the field of quantum computing and information \cite{debeaudrapbianwang, Cowtan_2020, KissingerG20, Sivarajah_2020}

For research realm beyond quantum, traditional ZX-calculus \cite{CoeckeDuncan} is inconvenient  as ``it lacks a way to directly encode the complex numbers" \cite{Weteringreviewpaper}.  To remedy this inconvenience while still preserving its power, we introduced an ``extended" version called algebraic ZX-calculus with new generators added in (not a real extension in the sense that  the new generators can be expressed by the original generators) \cite{wangalg2020}. While \cite{qwangnormalformbit} proves the completeness of algebraic ZX calculus using elementary matrix operations, the set of elementary operations given are a minimal set (multiplying the bottom row, adding the bottom row to other rows) and are unsuitable for practical matrix synthesis.


In this paper, we first provide diagrammatic representations for adding, multiplying and switching the rows and columns of a $2^m\times 2^n$ matrix, diagrammatically prove some basic properties about their inverses and transposes.  Then we show how to represent an arbitrary matrix of size $2^m\times 2^n$ in ZX diagrams. Finally we show how this representation method could be used for representing matchgates \cite{jozsa2008matchgates}.
We also include an implementation of this representation method in \texttt{discopy} \href{https://github.com/y-richie-y/qpl-represent-matrix}{(see [link])}, which numerically verifies these examples.
The results of this paper can be generalised from the field of complex numbers to arbitrary commutative semirings due to the completeness of algebraic ZX-calculus over semirings \cite{qwangrsmring}.

\textbf{Related Work:}
There is an established and related body of research known as Graphical Linear Algebra. \cite{bonchi2014interacting}
The diagrams in this formalism with $n$ inputs and $m$ outputs represent an $m \times n$ matrix, whilst the diagrams in ZX-calculus with $n$ inputs and $m$ outputs represent a $2^m \times 2^n$ matrix, which is more compact and more appropriate for representing the states and processes used in quantum computing. The techniques developed in this paper are also distinct from existing ZX publications that use elementary matrices for circuit synthesis \cite{duncan2020graph, kissinger2019cnot} as they either apply only to parity circuits or is part of a circuit extraction routine that apply to the $m \times n$ parity matrix, rather than the unitary matrix itself.




 \section{Algebraic ZX-calculus}
 In this section, we give a brief introduction to algebraic ZX-calculus by showing its generators, standard interpretation and rewriting rules. More details about algebraic ZX-calculus can be found in \cite{wangalg2020} and \cite{qwangnormalformbit}.
 First we give the generators of algebraic ZX-calculus as follows:
   \begin{table}[!h]
\begin{center}
\begin{tabular}{|r@{~}r@{~}c@{~}c|r@{~}r@{~}c@{~}c|}
\hline
$R_{Z,a}^{(n,m)}$&$:$&$n\to m$ & \scalebox{0.70}{%
	\beginpgfgraphicnamed{TikZit//generalgreenspider}
	\InputIfFileExists{TikZit//generalgreenspider.tikz}{}{\input{./figures/TikZit//generalgreenspider.tikz}}%
	\endpgfgraphicnamed
}  &  $\mathbb I$&$:$&$1\to 1$&%
	\beginpgfgraphicnamed{TikZit//Id}
	\InputIfFileExists{TikZit//Id.tikz}{}{\input{./figures/TikZit//Id.tikz}}%
	\endpgfgraphicnamed
 \\\hline
$H$&$:$&$1\to 1$ &\scalebox{0.8}{%
	\beginpgfgraphicnamed{TikZit//newhadamard}
	\InputIfFileExists{TikZit//newhadamard.tikz}{}{\input{./figures/TikZit//newhadamard.tikz}}%
	\endpgfgraphicnamed
}&  $\sigma$&$:$&$ 2\to 2$& %
	\beginpgfgraphicnamed{TikZit//swap}
	\InputIfFileExists{TikZit//swap.tikz}{}{\input{./figures/TikZit//swap.tikz}}%
	\endpgfgraphicnamed
\\\hline
   $C_a$&$:$&$ 0\to 2$& %
	\beginpgfgraphicnamed{TikZit//cap}
	\InputIfFileExists{TikZit//cap.tikz}{}{\input{./figures/TikZit//cap.tikz}}%
	\endpgfgraphicnamed
 &$ C_u$&$:$&$ 2\to 0$&%
	\beginpgfgraphicnamed{TikZit//cup}
	\InputIfFileExists{TikZit//cup.tikz}{}{\input{./figures/TikZit//cup.tikz}}%
	\endpgfgraphicnamed
 \\\hline
  $T$&$:$&$1\to 1$&%
	\beginpgfgraphicnamed{TikZit//triangle}
	\InputIfFileExists{TikZit//triangle.tikz}{}{\input{./figures/TikZit//triangle.tikz}}%
	\endpgfgraphicnamed
  & $T^{-1}$&$:$&$1\to 1$&%
	\beginpgfgraphicnamed{TikZit//triangleinv}
	\InputIfFileExists{TikZit//triangleinv.tikz}{}{\input{./figures/TikZit//triangleinv.tikz}}%
	\endpgfgraphicnamed
 \\\hline
\end{tabular}\caption{Generators of algebraic ZX-calculus, where $m,n\in \mathbb N$, $ a  \in \mathbb C$.} \label{qbzxgenerator} 
\end{center}
\end{table}
\FloatBarrier

 For simplicity, we make the following conventions: 
\[
	\beginpgfgraphicnamed{TikZit//spider0denote3}
	\InputIfFileExists{TikZit//spider0denote3.tikz}{}{\input{./figures/TikZit//spider0denote3.tikz}}%
	\endpgfgraphicnamed
\]
 \begin{equation} \label{andshortnotationeq}
	\beginpgfgraphicnamed{TikZit//andshortnote2}
	\InputIfFileExists{TikZit//andshortnote2.tikz}{}{\input{./figures/TikZit//andshortnote2.tikz}}%
	\endpgfgraphicnamed
 
      \end{equation}  
 where $\alpha \in \mathbb R, \tau \in \{ 0, \pi \}$.
 As a consequence, 
  \begin{equation}    \label{andshortnoterelat}
	\beginpgfgraphicnamed{TikZit//andshortnoterelation2}
	\InputIfFileExists{TikZit//andshortnoterelation2.tikz}{}{\input{./figures/TikZit//andshortnoterelation2.tikz}}%
	\endpgfgraphicnamed
 
      \end{equation}

 There is a standard interpretation $\left\llbracket \cdot \right\rrbracket$ for the ZX diagrams:
\[
\left\llbracket %
	\beginpgfgraphicnamed{TikZit//generalgreenspider}
	\InputIfFileExists{TikZit//generalgreenspider.tikz}{}{\input{./figures/TikZit//generalgreenspider.tikz}}%
	\endpgfgraphicnamed
 \right\rrbracket=\ket{0}^{\otimes m}\bra{0}^{\otimes n}+a\ket{1}^{\otimes m}\bra{1}^{\otimes n},
\left\llbracket %
	\beginpgfgraphicnamed{TikZit//redspider0p}
	\InputIfFileExists{TikZit//redspider0p.tikz}{}{\input{./figures/TikZit//redspider0p.tikz}}%
	\endpgfgraphicnamed
 \right\rrbracket=\sum_{\substack{0\leq i_1, \cdots, i_m,  j_1, \cdots, j_n\leq 1\\ i_1+\cdots+ i_m\equiv  j_1+\cdots +j_n(mod~ 2)}}\ket{i_1, \cdots, i_m}\bra{j_1, \cdots, j_n},
\]
\[
\left\llbracket%
	\beginpgfgraphicnamed{TikZit//newhadamard}
	\begin{tikzpicture}
	\begin{pgfonlayer}{nodelayer}
		\node [style=newh] (0) at (0, 0) {};
		\node [style=none] (1) at (0, 0.5) {};
		\node [style=none] (2) at (0, -0.5) {};
	\end{pgfonlayer}
	\begin{pgfonlayer}{edgelayer}
		\draw (1.center) to (2.center);
	\end{pgfonlayer}
\end{tikzpicture}}%
	\endpgfgraphicnamed
\right\rrbracket=\begin{pmatrix}
        1 & 1 \\
        1 & -1
 \end{pmatrix}, \quad \left\llbracket%
	\beginpgfgraphicnamed{TikZit//triangle}
	\begin{tikzpicture}
	\begin{pgfonlayer}{nodelayer}
		\node [style=none] (0) at (0, 0.5) {};
		\node [style=triangle] (1) at (0, 0) {};
		\node [style=none] (2) at (0, -0.5) {};
	\end{pgfonlayer}
	\begin{pgfonlayer}{edgelayer}
		\draw (0.center) to (2.center);
	\end{pgfonlayer}
\end{tikzpicture}}%
	\endpgfgraphicnamed
\right\rrbracket=\begin{pmatrix}
        1 & 1 \\
        0 & 1
 \end{pmatrix}, \quad \quad
  \left\llbracket%
	\beginpgfgraphicnamed{TikZit//triangleinv}
	\begin{tikzpicture}
	\begin{pgfonlayer}{nodelayer}
		\node [style=none] (0) at (0.25, 0.25) {-{\scriptsize1}};
		\node [style=triangle] (1) at (0, 0) {};
		\node [style=none] (2) at (0, -0.5) {};
		\node [style=none] (3) at (0, 0.5) {};
	\end{pgfonlayer}
	\begin{pgfonlayer}{edgelayer}
		\draw (3.center) to (2.center);
	\end{pgfonlayer}
\end{tikzpicture}}%
	\endpgfgraphicnamed
\right\rrbracket=\begin{pmatrix}
        1 & -1 \\
        0 & 1
 \end{pmatrix}, \quad
\left\llbracket%
	\beginpgfgraphicnamed{TikZit//singleredpi}
	\begin{tikzpicture}
	\begin{pgfonlayer}{nodelayer}
		\node [style=none] (0) at (0, 0.5) {};
		\node [style=rn_phase] (1) at (0, 0) {$\pi$};
		\node [style=none] (2) at (0, -0.5) {};
	\end{pgfonlayer}
	\begin{pgfonlayer}{edgelayer}
		\draw (0.center) to (2.center);
	\end{pgfonlayer}
\end{tikzpicture}}%
	\endpgfgraphicnamed
\right\rrbracket=\begin{pmatrix}
        0 & 1 \\
        1 & 0
 \end{pmatrix}, \quad
\left\llbracket%
	\beginpgfgraphicnamed{TikZit//Id}
	\begin{tikzpicture}
	\begin{pgfonlayer}{nodelayer}
		\node [style=none] (1) at (0.5, 0.3) {};
		\node [style=none] (2) at (0.5, -0.3) {};
		\node [style=none] (3) at (0.5, -0.5) {};
		\node [style=none] (4) at (0.5, 0.5) {};
	\end{pgfonlayer}
	\begin{pgfonlayer}{edgelayer}
		\draw (1.center) to (2.center);
	\end{pgfonlayer}
\end{tikzpicture}}%
	\endpgfgraphicnamed
\right\rrbracket=\begin{pmatrix}
        1 & 0 \\
        0 & 1
 \end{pmatrix}, 
  \]

\[
 \left\llbracket%
	\beginpgfgraphicnamed{TikZit//swap}
	\InputIfFileExists{TikZit//swap.tikz}{}{\input{./figures/TikZit//swap.tikz}}%
	\endpgfgraphicnamed
\right\rrbracket=\begin{pmatrix}
        1 & 0 & 0 & 0 \\
        0 & 0 & 1 & 0 \\
        0 & 1 & 0 & 0 \\
        0 & 0 & 0 & 1 
 \end{pmatrix}, \quad
  \left\llbracket%
	\beginpgfgraphicnamed{TikZit//cap}
	\begin{tikzpicture}
	\begin{pgfonlayer}{nodelayer}
		\node [style=none] (0) at (0, -0) {};
		\node [style=none] (1) at (1, -0) {};
	\end{pgfonlayer}
	\begin{pgfonlayer}{edgelayer}
		\draw [bend left=90, looseness=1.50] (0.center) to (1.center);
	\end{pgfonlayer}
\end{tikzpicture}}%
	\endpgfgraphicnamed
\right\rrbracket=\begin{pmatrix}
        1  \\
        0  \\
        0  \\
        1  \\
 \end{pmatrix}, \quad
   \left\llbracket%
	\beginpgfgraphicnamed{TikZit//cup}
	\begin{tikzpicture}
	\begin{pgfonlayer}{nodelayer}
		\node [style=none] (0) at (0, 0.5) {};
		\node [style=none] (1) at (1, 0.5) {};
	\end{pgfonlayer}
	\begin{pgfonlayer}{edgelayer}
		\draw [bend right=90, looseness=1.50] (0.center) to (1.center);
	\end{pgfonlayer}
\end{tikzpicture}}%
	\endpgfgraphicnamed
\right\rrbracket=\begin{pmatrix}
        1 & 0 & 0 & 1 
         \end{pmatrix}, 
 \quad
  \left\llbracket%
	\beginpgfgraphicnamed{TikZit//emptysquare}
	\InputIfFileExists{TikZit//emptysquare.tikz}{}{\input{./figures/TikZit//emptysquare.tikz}}%
	\endpgfgraphicnamed
\right\rrbracket=1,  
   \]

\[  \llbracket D_1\otimes D_2  \rrbracket =  \llbracket D_1  \rrbracket \otimes  \llbracket  D_2  \rrbracket, \quad 
 \llbracket D_1\circ D_2  \rrbracket =  \llbracket D_1  \rrbracket \circ  \llbracket  D_2  \rrbracket,
  \]
where 
$$ a  \in \mathbb C, \quad \ket{0}= \begin{pmatrix}
        1  \\
        0  \\
 \end{pmatrix}, \quad 
 \bra{0}=\begin{pmatrix}
        1 & 0 
         \end{pmatrix},
 \quad  \ket{1}= \begin{pmatrix}
        0  \\
        1  \\
 \end{pmatrix}, \quad 
  \bra{1}=\begin{pmatrix}
     0 & 1 
         \end{pmatrix}.
 $$

\begin{remark}
For convenience, we often ignore the symbol of interpretation  $\left\llbracket  \right\rrbracket$ and equalise matrices and diagrams directly. 
  \end{remark}

 Next we present a set of rewriting rules for algebraic ZX-calculus \cite{qwangnormalformbit}.
  \begin{figure}[!h]
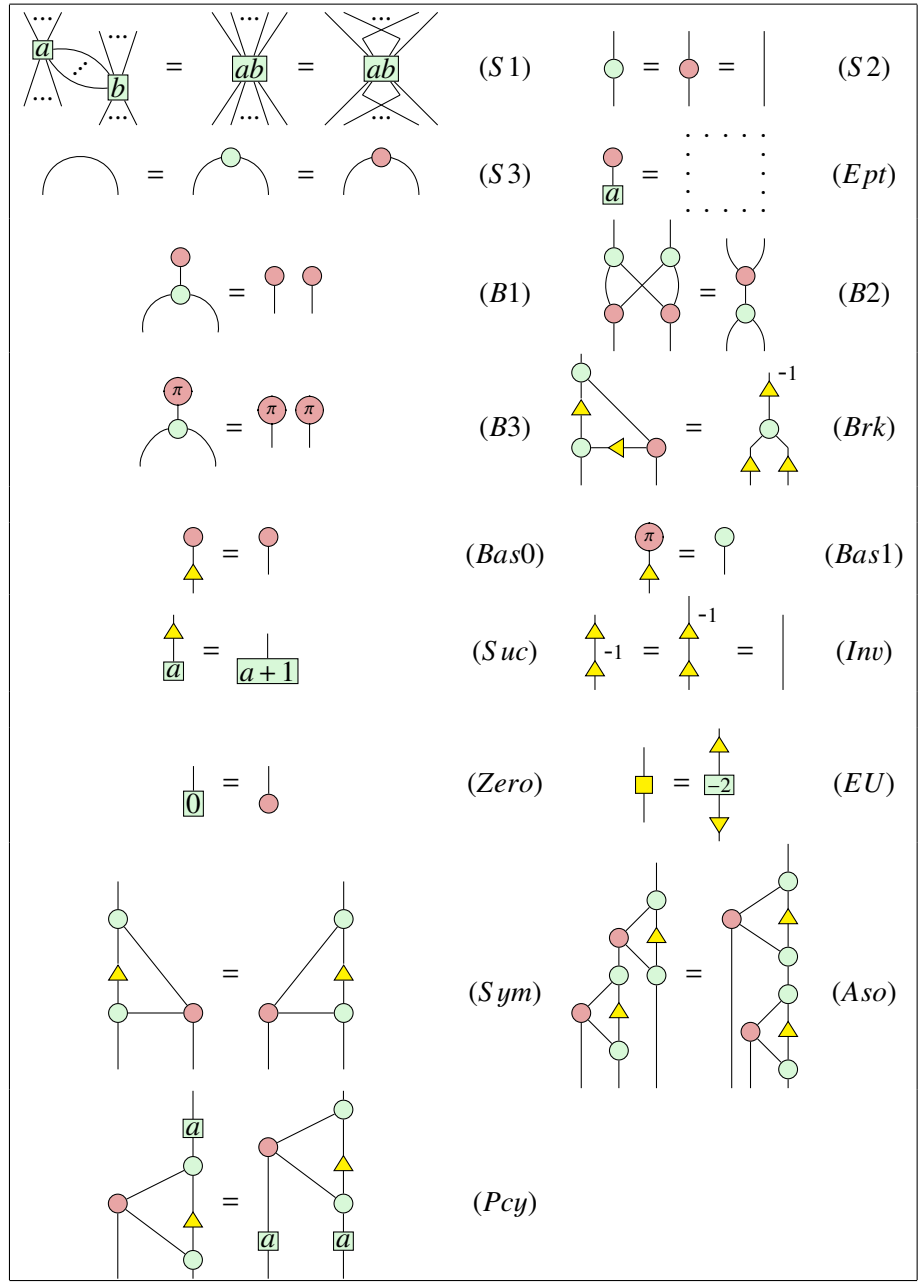

\begin{center}
\[
\quad \qquad\begin{array}{|cccc|}
\hline
	\beginpgfgraphicnamed{TikZit//generalgreenspiderfusesym}
	\InputIfFileExists{TikZit//generalgreenspiderfusesym.tikz}{}{\input{./figures/TikZit//generalgreenspiderfusesym.tikz}}%
	\endpgfgraphicnamed
&(S1) &%
	\beginpgfgraphicnamed{TikZit//s2new2}
	\InputIfFileExists{TikZit//s2new2.tikz}{}{\input{./figures/TikZit//s2new2.tikz}}%
	\endpgfgraphicnamed
 &(S2)\\
	\beginpgfgraphicnamed{TikZit//induced_compact_structure}
	\InputIfFileExists{TikZit//induced_compact_structure.tikz}{}{\input{./figures/TikZit//induced_compact_structure.tikz}}%
	\endpgfgraphicnamed
&(S3) & %
	\beginpgfgraphicnamed{TikZit//rdotaempty}
	\InputIfFileExists{TikZit//rdotaempty.tikz}{}{\input{./figures/TikZit//rdotaempty.tikz}}%
	\endpgfgraphicnamed
  &(Ept) \\
	\beginpgfgraphicnamed{TikZit//b1ring}
	\InputIfFileExists{TikZit//b1ring.tikz}{}{\input{./figures/TikZit//b1ring.tikz}}%
	\endpgfgraphicnamed
&(B1)  & %
	\beginpgfgraphicnamed{TikZit//b2ring}
	\InputIfFileExists{TikZit//b2ring.tikz}{}{\input{./figures/TikZit//b2ring.tikz}}%
	\endpgfgraphicnamed
&(B2)\\ 
  %
	\beginpgfgraphicnamed{TikZit//rpicopyns}
	\InputIfFileExists{TikZit//rpicopyns.tikz}{}{\input{./figures/TikZit//rpicopyns.tikz}}%
	\endpgfgraphicnamed
 &(B3)& %
	\beginpgfgraphicnamed{TikZit//anddflipns}
	\InputIfFileExists{TikZit//anddflipns.tikz}{}{\input{./figures/TikZit//anddflipns.tikz}}%
	\endpgfgraphicnamed
&(Brk) \\
 & &&\\
	\beginpgfgraphicnamed{TikZit//triangleocopy}
	\InputIfFileExists{TikZit//triangleocopy.tikz}{}{\input{./figures/TikZit//triangleocopy.tikz}}%
	\endpgfgraphicnamed
 &(Bas0) &%
	\beginpgfgraphicnamed{TikZit//trianglepicopyns}
	\InputIfFileExists{TikZit//trianglepicopyns.tikz}{}{\input{./figures/TikZit//trianglepicopyns.tikz}}%
	\endpgfgraphicnamed
&(Bas1)\\
	\beginpgfgraphicnamed{TikZit//plus1}
	\InputIfFileExists{TikZit//plus1.tikz}{}{\input{./figures/TikZit//plus1.tikz}}%
	\endpgfgraphicnamed
&(Suc)& %
	\beginpgfgraphicnamed{TikZit//triangleinvers}
	\InputIfFileExists{TikZit//triangleinvers.tikz}{}{\input{./figures/TikZit//triangleinvers.tikz}}%
	\endpgfgraphicnamed
  & (Inv) \\
   & &&\\
	\beginpgfgraphicnamed{TikZit//zerotoredns}
	\InputIfFileExists{TikZit//zerotoredns.tikz}{}{\input{./figures/TikZit//zerotoredns.tikz}}%
	\endpgfgraphicnamed
&(Zero)& %
	\beginpgfgraphicnamed{TikZit//eunoscalar}
	\InputIfFileExists{TikZit//eunoscalar.tikz}{}{\input{./figures/TikZit//eunoscalar.tikz}}%
	\endpgfgraphicnamed
&(EU) \\
	\beginpgfgraphicnamed{TikZit//lemma4}
	\InputIfFileExists{TikZit//lemma4.tikz}{}{\input{./figures/TikZit//lemma4.tikz}}%
	\endpgfgraphicnamed
&(Sym) &  %
	\beginpgfgraphicnamed{TikZit//associate}
	\InputIfFileExists{TikZit//associate.tikz}{}{\input{./figures/TikZit//associate.tikz}}%
	\endpgfgraphicnamed
 &(Aso)\\ 
	\beginpgfgraphicnamed{TikZit//TR1314combine2}
	\InputIfFileExists{TikZit//TR1314combine2.tikz}{}{\input{./figures/TikZit//TR1314combine2.tikz}}%
	\endpgfgraphicnamed
&(Pcy) &&\\ 
  		  		\hline
  		\end{array}\]
  	\end{center}
  	\caption{Algebraic rules, $a, b \in \mathbb C.$ The upside-down flipped versions of the rules are assumed to hold as well. }\label{figurealgebra}
  \end{figure}
 \FloatBarrier

\section{Elementary matrices in ZX diagrams }
 In this section, we show how to represent elementary matrices of size $2^m\times 2^m$  in algebraic ZX-calculus. Elementary matrices correspond to elementary operations on matrices: left multiplication  by an elementary matrix stands for elementary row operations, while right multiplication stands for elementary column operations.  
 
 There are three types of elementary matrices, the first type  performs the row (column) multiplication:
\[ R_{i\times(a)}=C_{i\times(a)}=
\begin{blockarray}{cccccl}
\begin{block}{(ccccc)l}
     1 & \cdots & 0 &\cdots & 0 &r_0\\
     \vdots    & \ddots & &&  \vdots&  \\
        0   & \cdots & a & \cdots& 0&r_i\\
       \vdots    & &  & \ddots&  \vdots &  \\
        0   & \cdots &0 & \cdots& 1&r_{2^m-1}\\
\end{block}
\end{blockarray}
 \] 
 The second type performs the row (column) addition:
 \[ R_{i\times(a)+ j}=C_{j\times(a)+ i}=
 \begin{blockarray}{cccccccl}
\begin{block}{(ccccccc)l}
     1 & \cdots & 0 &\cdots & 0&\cdots & 0 &r_0\\
     \vdots    & \ddots & &&&&  \vdots&  \\
        0   & \cdots & 1 & \cdots& 0& \cdots& 0&r_i\\
       \vdots    & &  & \ddots&  &&\vdots &  \\
        0   & \cdots &a & \cdots& 1& \cdots& 0&r_{j}\\
           \vdots    & &  & \ddots& && \vdots &  \\
               0   & \cdots &0 & \cdots& 0 &\cdots&1&r_{2^m-1}\\
\end{block}
\end{blockarray}
 \] 
 The third type performs the row (column) switching:
  \[ R_{i\leftrightarrow j}=C_{i\leftrightarrow j}=
 \begin{blockarray}{cccccccl}
\begin{block}{(ccccccc)l}
     1 & \cdots & 0 &\cdots & 0&\cdots & 0 &r_0\\
     \vdots    & \ddots & &&&&  \vdots&  \\
        0   & \cdots & 0 & \cdots& 1& \cdots& 0&r_i\\
       \vdots    & &  & \ddots&  &&\vdots &  \\
        0   & \cdots &1 & \cdots& 0& \cdots& 0&r_{j}\\
           \vdots    & &  & \ddots& && \vdots &  \\
               0   & \cdots &0 & \cdots& 0 &\cdots&1&r_{2^m-1}\\
\end{block}
\end{blockarray}
 \] 
Note that we count the rows from $0$ to $2^m-1$.
 
 Below we give the representation of the three kinds of elementary matrices in  Theorems \ref{rowitimesatm},  \ref{rowitimeaplusjtm}, and  \ref{swapmatrix}. Suppose $\{e_{k}~|~ 0\leq k \leq 2^{m}-1\}$ are the $2^m$-dimensional standard unit  column vectors (with entries all 0s except for one 1):
 \[
e_k=\begin{blockarray}{cl}
\begin{block}{(c)l}
     0 &r_0\\
      \vdots    &  \\
      1&r_k\\
     \vdots &  \\
       0&r_{2^m-1}\\
\end{block}
\end{blockarray}
 \]
 where $r_i$ denotes the i-th row, $0\leq i \leq 2^{m}-1,  m\geq 1$. Then
 \begin{equation}\label{qbitstovector}
$$\ket{a_{m-1}\cdots a_i \cdots a_0}=e_{\sum_{i=0}^{m-1}a_i2^i},$$
\end{equation}
where $a_i\in \{0, 1\}, 0\leq i \leq m-1$ \cite{Mermin}.

\begin{theorem}(Row multiplication)\label{rowitimesatm}
For any $2^m \times 2^m$ matrix, the $i$-th ($0 \leq i \leq 2^{m-1}$) row (column) multiplied by any number $a$ (including $0$) can be represented as
\[ 
R_{i\times(a)}
= %
	\beginpgfgraphicnamed{TikZit//rowitimesa}
	\InputIfFileExists{TikZit//rowitimesa.tikz}{}{\input{./figures/TikZit//rowitimesa.tikz}}%
	\endpgfgraphicnamed

\]
where $i=\sum_{k=0}^{m-1}a_k2^k, a_k \in \{0, 1\}, \bar{a}_k=a_k\oplus 1, \oplus$ is the modulo 2 addition.
\end{theorem}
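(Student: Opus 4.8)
The plan is to prove the identity at the level of the standard interpretation $\denote{\cdot}$, that is, to show that the matrix of the right-hand diagram equals $R_{i\times(a)}$. By \eqref{qbitstovector} it suffices to check the action on each computational basis vector $e_k = \ket{b_{m-1}\cdots b_0}$ (with $k = \sum_j b_j 2^j$), since $R_{i\times(a)}$ is characterised by $e_k \mapsto e_k$ for $k \neq i$ and $e_i \mapsto a\,e_i$; equivalently $R_{i\times(a)} = I + (a-1)\ketbra{i}{i}$.

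First I would isolate the role of the complemented bits $\bar a_k$. On the wires where $\bar a_k = 1$ (equivalently $a_k = 0$) the diagram carries a red $\pi$ spider, whose interpretation is the bit flip $X = \begin{pmatrix}0&1\\1&0\end{pmatrix}$. Writing $X^{\bar\imath}$ for the tensor product of these flips, one checks that $X^{\bar\imath}\ket{i} = \ket{1\cdots1}$ while every other basis state is sent to a basis state different from $\ket{1\cdots1}$, so that $R_{i\times(a)} = X^{\bar\imath}\,M_a\,X^{\bar\imath}$ with $M_a = I + (a-1)\ketbra{1\cdots1}{1\cdots1}$. The claim then reduces to showing that the central gadget of the diagram realises $M_a = \mathrm{diag}(1,\dots,1,a)$, the operator scaling only the all-ones amplitude by $a$ and acting as the identity elsewhere.

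The heart of the argument is therefore the verification $\denote{\text{central gadget}} = M_a$, which I would carry out by induction on $m$. The base case $m=1$ is immediate: the gadget is a single green node with parameter $a$, whose interpretation $\ketbra{0}{0} + a\ketbra{1}{1}$ is exactly $\mathrm{diag}(1,a)$. For the inductive step I would peel off one control wire using the AND-gate together with the rules (Brk), (Bas0) and (Bas1), reducing the $m$-wire gadget to an $(m-1)$-wire gadget on the subspace where the peeled bit equals $1$ while leaving all states with that bit equal to $0$ untouched; this is precisely what preserves the $2^m-1$ non-corner diagonal entries at $1$. Finally, tracking how the bit pattern $(a_k)$ maps under the conjugation by $X^{\bar\imath}$ back to the index $i$ closes the argument.

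The step I expect to be the main obstacle is exactly this inductive verification of the central gadget: one must ensure the AND/green-node construction acts as the identity on \emph{every} basis state except $\ket{1\cdots1}$, since it is easy to write a gadget that scales $\ket{1\cdots1}$ correctly but inadvertently disturbs a middle diagonal entry. One must also separately handle the degenerate case $a=0$, where $M_0$ is a genuine projector and several generic rewrites (for instance those that implicitly divide by $a$) are unavailable; there a direct computation of the interpretation on basis states, rather than a rewriting argument, is the safer route.
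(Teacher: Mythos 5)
Your proposal is sound and, like the paper, ultimately rests on checking the standard interpretation on computational basis vectors; but you organise that check quite differently from the paper, which uses neither your conjugation step nor your induction on $m$. The paper's proof works with general $i$ directly: it plugs the basis state $\ket{a_{m-1}\cdots a_0}$ (as red $a_k\pi$ states) into the diagram, uses $a_k \oplus \bar a_k = 1$ so that every leg feeding the AND gadget carries a $\pi$ and the $a$-labelled node fires, yielding $a\,e_i$; then, for \emph{any} other basis input $c_{m-1}\pi,\dots,c_0\pi$, it observes that there must be some $j$ with $c_j = \bar a_j$, so the AND gadget is fed a $0$ and the whole gadget evaluates trivially, yielding $e_c$. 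That single generic observation (``some bit differs'') does in one stroke the work of your entire induction, and it is exactly your worry about ``disturbing a middle diagonal entry,'' resolved uniformly for all $2^m-1$ non-target states at once. Your route buys a cleaner modular statement (the central gadget equals $\mathrm{diag}(1,\dots,1,a)$, reusable elsewhere), but it costs two steps the paper avoids: (i) you must reconcile your factorisation $R_{i\times(a)} = X^{\bar\imath} M_a X^{\bar\imath}$ with the actual diagram, where the red $\pi$'s sit on the legs feeding the AND gadget rather than on the main wires, so the identification itself needs the $\pi$-copy rule (B3) to push the flips through the green copy spiders; and (ii) the induction itself, whose peeling step is exactly the (Bas0)/(Bas1)/(Brk) evaluation you describe. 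Finally, your caution about $a=0$ is unnecessary: neither your evaluation rules nor the paper's rewrites ever divide by $a$ --- the $a$-labelled green node and all rules used are valid for every $a \in \mathbb{C}$ --- which is precisely why the theorem can state ``including $0$'' without a separate case.
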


\begin{remark}
In particular, if $m=1$, then the diagram becomes 
$$ %
	\beginpgfgraphicnamed{TikZit//rowitimesamwq1}
	\InputIfFileExists{TikZit//rowitimesamwq1.tikz}{}{\input{./figures/TikZit//rowitimesamwq1.tikz}}%
	\endpgfgraphicnamed
 $$
  \end{remark}


 


 \begin{example}\label{ex1}
To divide row 1 by 5, we have
$$R_{1\times(\frac{1}{5})}=  \begin{pmatrix}
        1 & 0 & 0 & 0 & 0 & 0 & 0 & 0\\
        0 & \frac{1}{5}&0&0&0&0 & 0&0 \\
           0 & 0 & 1 & 0 & 0 & 0 & 0 & 0\\ 
             0 & 0 & 0 & 1 & 0 & 0 & 0 & 0\\ 
   0 & 0 & 0 & 0 & 1 & 0 & 0 & 0\\ 
            0 & 0 & 0 & 0 & 0 & 1 & 0 & 0\\ 
              0 & 0 & 0 & 0 & 1 & 0 & 1 & 0\\ 
         0 & 0 & 0 & 0 & 0 & 0 & 0 & 1\\  
          \end{pmatrix} \quad = \quad %
	\beginpgfgraphicnamed{TikZit//row2time3rd0}
	\InputIfFileExists{TikZit//row2time3rd0.tikz}{}{\input{./figures/TikZit//row2time3rd0.tikz}}%
	\endpgfgraphicnamed
$$
Here $m=3, i=1=0\cdot 2^2+0\cdot 2^1+1\cdot 2^0,$ so $\bar{a}_2=1, \bar{a}_1=1, \bar{a}_0=0$.
\end{example}

\begin{theorem}(Row addition)\label{rowitimeaplusjtm}
Suppose  $i=a_{m-1}2^{m-1}+\cdots + a_{j_s}2^{j_s}+\cdots+a_{j_1}2^{j_1}+\cdots+ a_02^0, a_k \in \{0, 1\}, \bar{a}_k=a_k\oplus 1, \oplus$ is the modulo 2 addition, $j=a_{m-1}2^{m-1}+\cdots + \bar{a}_{j_s}2^{j_s}+\cdots+\bar{a}_{j_1}2^{j_1}+\cdots+ a_02^0$, $0\leq j_1< \cdots < j_s \leq m-1, 1\leq s \leq m$. In other words, $j_1, \cdots, j_s$ are exactly the bit positions where the numbers $i$ and $j$ differ. Then 
\[
R_{i\times(a)+ j}= %
	\beginpgfgraphicnamed{TikZit//rowitimeaplusj}
	\InputIfFileExists{TikZit//rowitimeaplusj.tikz}{}{\input{./figures/TikZit//rowitimeaplusj.tikz}}%
	\endpgfgraphicnamed

\]
\end{theorem}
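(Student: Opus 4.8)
The plan is to check the proposed diagram against the standard interpretation by exhibiting that both compute the linear map $I+a\ket{j}\bra{i}$. First I would record that, by definition, the matrix $R_{i\times(a)+ j}$ has entries $\delta_{kl}+a\,\delta_{kj}\delta_{li}$, i.e.\ it equals $I+a\ket{j}\bra{i}$; equivalently it fixes every $e_k$ with $k\neq i$ and sends $e_i\mapsto e_i+a\,e_j$. In the single-qubit case this is exactly the triangle, since $\left(\begin{smallmatrix}1&1\\0&1\end{smallmatrix}\right)=I+\ket{0}\bra{1}$, so the triangle is the prototype of a row addition and the whole theorem amounts to promoting it to a controlled, $a$-weighted version acting on the bits $j_1,\dots,j_s$ where $i$ and $j$ differ.

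Next I would dissect the right-hand diagram into its control part and its active part. On each position $k\notin\{j_1,\dots,j_s\}$, where $i$ and $j$ agree on the common bit $a_k$, the green spiders serve as indicators that test whether the incoming bit matches $a_k$; this I would justify with the basis-copy rules (Bas0) and (Bas1) together with spider fusion (S1), which let basis states be pushed through and absorbed. On the positions $j_1,\dots,j_s$ the triangle gates, fed by a phase spider carrying the parameter $a$, supply the coefficient and simultaneously flip the stored $i$-pattern $(a_{j_1},\dots,a_{j_s})$ to the $j$-pattern $(\bar a_{j_1},\dots,\bar a_{j_s})$, so that the only correction added to the identity is the term $a\ket{j}\bra{i}$.

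I would then evaluate the diagram on a generic computational basis state $\ket{b_{m-1}\cdots b_0}$. If some common bit $b_k$ disagrees with $a_k$, the corresponding indicator is off and the network reduces to the identity, in agreement with $R_{i\times(a)+ j}\,e_k=e_k$. If all common bits agree, I trace the differing wires through the triangles: when they already carry the $i$-pattern the triangle structure returns the input together with an $a$-weighted copy carrying the $j$-pattern, giving $\ket{i}\mapsto\ket{i}+a\ket{j}$, and on every other configuration of the differing bits it acts trivially. Matching these outputs entry by entry against $I+a\ket{j}\bra{i}$ finishes the verification.

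The step I expect to be the main obstacle is the case $s>1$, where $i$ and $j$ differ in several positions simultaneously. Here the diagram must flip all of $j_1,\dots,j_s$ \emph{in concert} and attach a single scalar $a$, without generating spurious partially-flipped basis terms that would create extra off-diagonal entries. The heart of the argument is therefore to show that the spider tying the differing wires to the shared coefficient forces an all-or-nothing flip; I expect this to follow from repeated applications of (S1), (Bas0) and (Bas1), which ensure the interfering contributions cancel and leave precisely the single entry $a$ at position $(j,i)$.
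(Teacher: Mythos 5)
Your proposal follows essentially the same route as the paper's own proof: both verify the diagram column-by-column on computational basis inputs, showing that the $i$-pattern input is sent to $e_i + a\,e_j$ while every other basis input is returned unchanged. The only difference is bookkeeping: the paper disposes of all inputs $c \neq i$ uniformly — any mismatch position $c_k = \bar{a}_k$, whether at a common bit or at one of the differing bits $j_1,\dots,j_s$, kills the $a$-weighted branch in one rewrite — whereas you split this into two sub-cases, and the ``all-or-nothing flip'' for $s>1$ that you flag as the crux is exactly what the paper's rewrite sequence (using the spider tying the differing wires together, plus the copy rules and the AND-gate lemmas, not just (S1), (Bas0), (Bas1)) establishes.
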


\begin{remark}
In particular, if $m=1$, then the diagram becomes 
$$ %
	\beginpgfgraphicnamed{TikZit//rowitimesaplusjmwq1}
	\InputIfFileExists{TikZit//rowitimesaplusjmwq1.tikz}{}{\input{./figures/TikZit//rowitimesaplusjmwq1.tikz}}%
	\endpgfgraphicnamed
 $$
  \end{remark}

 



 \begin{example}\label{ex2}
  To add row 1, multiplied by 2, to row 3, we have
$$R_{1\times(2)+3}=  \begin{pmatrix}
        1 & 0 & 0 & 0 & 0 & 0 & 0 & 0\\
        0 & 1&0&0&0&0 & 0&0 \\
           0 & 0 & 1 & 0 & 0 & 0 & 0 & 0\\ 
             0 & 2 & 0 & 1 & 0 & 0 & 0 & 0\\ 
   0 & 0 & 0 & 0 & 1 & 0 & 0 & 0\\ 
            0 & 0 & 0 & 0 & 0 & 1 & 0 & 0\\ 
              0 & 0 & 0 & 0 & 1 & 0 & 1 & 0\\ 
         0 & 0 & 0 & 0 & 0 & 0 & 0 & 1\\  
          \end{pmatrix} \quad = \quad %
	\beginpgfgraphicnamed{TikZit//row1time2plus3}
	\InputIfFileExists{TikZit//row1time2plus3.tikz}{}{\input{./figures/TikZit//row1time2plus3.tikz}}%
	\endpgfgraphicnamed
$$
Here $m=3, i=1=0\cdot 2^2+0\cdot 2^1+1\cdot 2^0, j=3=0\cdot 2^2+1\cdot 2^1+1\cdot 2^0,$ so $\bar{a}_2=1, \bar{a}_1=1, \bar{a}_0=0, j_1=1, s=1$.
\end{example}

\begin{theorem}(Row switching)\label{swapmatrix}
Suppose  $m \geq 2, i=a_{m-1}2^{m-1}+\cdots + a_{j_s}2^{j_s}+\cdots+a_{j_k}2^{j_k}+ \cdots+a_{j_1}2^{j_1}+\cdots+ a_02^0, a_k \in \{0, 1\}, \bar{a}_k=a_k\oplus 1, \oplus$ is the modulo 2 addition, $j=a_{m-1}2^{m-1}+\cdots + \bar{a}_{j_s}2^{j_s}+\cdots+ \bar{a}_{j_k}2^{j_k}+\cdots+\bar{a}_{j_1}2^{j_1}+\cdots+ a_02^0$, $0\leq j_1< \cdots  < j_k < \cdots < j_s \leq m-1, 1\leq s \leq m$. In other words, $j_1, \cdots, j_s$ are exactly the bit positions where the numbers $i$ and $j$ differ.  Then 
\[
R_{i\leftrightarrow j}= %
	\beginpgfgraphicnamed{TikZit//rowitiswapj}
	\InputIfFileExists{TikZit//rowitiswapj.tikz}{}{\input{./figures/TikZit//rowitiswapj.tikz}}%
	\endpgfgraphicnamed

\]
\textbf{Note:} There are wires between the red spiders labelled $\overline{(a_{j_1} \oplus a_{j_x})}\pi$ and the green spider at wire $j_1$.
\end{theorem}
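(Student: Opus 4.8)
The plan is to prove the identity \emph{semantically}: since the interpretation sends $\otimes$ to $\otimes$ and $\circ$ to $\circ$, it suffices to compute the standard interpretation of the displayed diagram and check that, as a $2^m\times 2^m$ matrix, it equals the transposition that exchanges the basis vectors $e_i$ and $e_j$ and fixes every other $e_k$. Writing a generic computational-basis input as $\ket{b_{m-1}\cdots b_0}$, I would split the $m$ wires into the \emph{common} positions $k\notin\{j_1,\dots,j_s\}$ (where $i$ and $j$ share the value $a_k$) and the \emph{differing} positions $j_1<\cdots<j_s$. The claim to verify is then that the diagram flips all of the differing bits precisely when the common bits already carry the pattern $\{a_k\}_{k\notin\{j_1,\dots,j_s\}}$ \emph{and} the differing bits are in one of the two legal configurations $(a_{j_1},\dots,a_{j_s})$ or $(\bar a_{j_1},\dots,\bar a_{j_s})$, and acts as the identity otherwise.

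The structural heart of the argument is the role of the reference wire $j_1$. Using the interpretation of the green spider as the computational-basis copy map and of the red spiders as parity (XOR) operations, I would read the green spider at wire $j_1$ as broadcasting the value $b_{j_1}$, and each red spider labelled $\overline{(a_{j_1}\oplus a_{j_x})}\pi$ as enforcing the parity tie $b_{j_x}=b_{j_1}\oplus(a_{j_1}\oplus a_{j_x})$. A short check shows this family of ties holds for \emph{all} $x$ exactly when $(b_{j_1},\dots,b_{j_s})$ equals config $i$ (all ties giving $b_{j_x}=a_{j_x}$) or config $j$ (all ties giving $b_{j_x}=\bar a_{j_x}$); this is why the offsets $a_{j_1}\oplus a_{j_x}$, and hence the phases of the Note, take exactly the stated values. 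Together with the common-bit controls this guarantees that no spurious pair of basis vectors is exchanged, so the diagram realises the single transposition $(i\;j)$.

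I would organise the computation as an induction on $s$, the number of differing bits. The base case $s=1$ reduces to a multiply-controlled $X$ on wire $j_1$ (controlled on the remaining $m-1$ wires matching the common pattern), which is handled directly by spider fusion (S1), the copy rule (B1) and the bialgebra rule (B2). For the inductive step I would conjugate by the $\mathrm{CNOT}$-like gadget tying wire $j_x$ to wire $j_1$: this aligns one further differing bit with the reference and drops $s$ by one, the parity offset being absorbed by a red $\pi$ exactly where $a_{j_1}\oplus a_{j_x}=1$. As an independent cross-check, the whole matrix can be recovered from the previous two theorems through the elementary decomposition
\[
R_{i\leftrightarrow j}=R_{j\times(-1)}\,R_{j\times(1)+ i}\,R_{i\times(-1)+ j}\,R_{j\times(1)+ i},
\]
each factor of which is represented by Theorem~\ref{rowitimesatm} or Theorem~\ref{rowitimeaplusjtm}.

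The main obstacle I anticipate is the bookkeeping rather than any single hard rewrite: one must track the interaction between the common-bit controls and the differing-bit parity ties carefully enough to be sure that the operation is the identity on \emph{every} basis vector except $e_i$ and $e_j$, i.e.\ that enlarging $s$ introduces no extra transpositions among states that happen to agree on the common bits. Making the $\mathrm{CNOT}$-conjugation step rigorous diagrammatically — showing that the gadget genuinely commutes past the core and reduces to the $s-1$ diagram using only (B1), (B2) and spider fusion — is where the care is needed, and is also the step that pins down the precise phase $\overline{(a_{j_1}\oplus a_{j_x})}\pi$ and the wiring described in the Note.
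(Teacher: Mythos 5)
Your first two paragraphs are, in substance, exactly the paper's proof: the paper verifies the standard interpretation of the diagram on computational basis inputs, checking that $e_i\mapsto e_j$, $e_j\mapsto e_i$, and that every other basis vector is fixed, by splitting into the case where the input disagrees with $i,j$ at a position outside $\{j_1,\dots,j_s\}$ (a common-bit control fails) and the case where it disagrees only inside $\{j_1,\dots,j_s\}$. For the latter case the paper proves contrapositively, via a small modulo-2 contradiction, that at least one tie $c_{j_x}\oplus c_{j_1}=a_{j_x}\oplus a_{j_1}$ must fail; your formulation (all ties hold simultaneously iff the differing bits are in configuration $i$ or configuration $j$) is the same parity fact, packaged slightly more cleanly. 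Where you diverge is the organisation: the paper does the verification directly in one pass and needs no induction on $s$. Your induction via CNOT conjugation is sound in principle — conjugating by a CNOT with control $j_1$ and target $j_x$ does convert a point control with phase $\overline{(a_{j_1}\oplus a_{j_x})}\pi$ on wire $j_x$ into the parity tie of the Note, and turns the flip on $j_1$ into flips on both wires — but it buys nothing over the direct check while leaving you owing its hardest step: the diagrammatic lemma that the CNOT gadget commutes past the triangle-based AND core, which will need triangle-specific rules (e.g.\ (Bas0), (Bas1) or the paper's auxiliary lemmas), not just (S1), (B1), (B2). Finally, treat your ``independent cross-check'' with care: the factorization $R_{i\leftrightarrow j}=R_{j\times(-1)}\,R_{j\times(1)+ i}\,R_{i\times(-1)+ j}\,R_{j\times(1)+ i}$ is a correct matrix identity, but composing the corresponding diagrams from Theorems~\ref{rowitimesatm} and~\ref{rowitimeaplusjtm} produces a different, larger diagram; rewriting that composite into the single diagram asserted by the theorem is itself a nontrivial derivation, so the cross-check validates the matrix being represented, not the stated diagrammatic equality.
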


\begin{remark}
Note that for the special case
 when $s =1$, we have $a_{j_1}=a_{j_s}$, therefore 
 $\overline{(a_{j_1}+a_{j_s})}\pi=\pi$. Then
 \[
R_{i\leftrightarrow j}= %
	\beginpgfgraphicnamed{TikZit//rowitiswapjseq1}
	\InputIfFileExists{TikZit//rowitiswapjseq1.tikz}{}{\input{./figures/TikZit//rowitiswapjseq1.tikz}}%
	\endpgfgraphicnamed

\]
Hence there are always exactly $m-1$ inputs to the AND gate.
Also for the  special case
 when $m=s =2$,  we have $a_{j_k}=a_{j_s}, a_0=\bar{a}_1$, therefore  $\overline{(a_{1}+a_{0})}\pi=0$. Then
  \[
R_{i\leftrightarrow j}= %
	\beginpgfgraphicnamed{TikZit//rowitiswapjseq2}
	\InputIfFileExists{TikZit//rowitiswapjseq2.tikz}{}{\input{./figures/TikZit//rowitiswapjseq2.tikz}}%
	\endpgfgraphicnamed

\]
If $m=1$, then $s=1$ since $i\neq j$, therefore 
$$R_{i\leftrightarrow j}= %
	\beginpgfgraphicnamed{TikZit//rowitiswapjmeq1}
	\InputIfFileExists{TikZit//rowitiswapjmeq1.tikz}{}{\input{./figures/TikZit//rowitiswapjmeq1.tikz}}%
	\endpgfgraphicnamed

 $$
\end{remark}



 \begin{example}\label{ex3}
To swap rows 1 and 7, we have
\begin{center}
$R_{1\leftrightarrow 7}=  \begin{pmatrix}
        1 & 0 & 0 & 0 & 0 & 0 & 0 & 0\\
        0 & 0 & 0 & 0 & 0 & 0 & 0 & 1\\  
           0 & 0 & 1 & 0 & 0 & 0 & 0 & 0\\ 
             0 & 0 & 0 & 1 & 0 & 0 & 0 & 0\\ 
   0 & 0 & 0 & 0 & 1 & 0 & 0 & 0\\ 
            0 & 0 & 0 & 0 & 0 & 1 & 0 & 0\\ 
              0 & 0 & 0 & 0 & 0 & 0 & 1 & 0\\ 
                 0 & 1&0&0&0&0 & 0&0 \\
         \end{pmatrix} \quad = \quad %
	\beginpgfgraphicnamed{TikZit//switch28v2}
	\InputIfFileExists{TikZit//switch28v2.tikz}{}{\input{./figures/TikZit//switch28v2.tikz}}%
	\endpgfgraphicnamed
$
\end{center}
         Here $m=3, i=1=0\cdot 2^2+0\cdot 2^1+1\cdot 2^0, j=7=1\cdot 2^2+1\cdot 2^1+1\cdot 2^0,$ so $\bar{a}_2=1, \bar{a}_1=1, \bar{a}_0=0, j_1=1, j_2=2, s=2$.
\end{example}

 

  \section{Properties of diagrammatic elementary matrices}\label{properties}  
  In this section, we prove by diagrams some properties of elementary matrices on their inverses and transpose. The proofs are given in the appendix. 
  

  \begin{proposition}\label{multiplicationinv} 
 Suppose $a \neq 0, a \in \mathbb{C}$. Then $ R_{i\times(a)}^{-1}=R_{i\times(\frac{1}{a})}, i.e.,  R_{i\times(a)}R_{i\times(\frac{1}{a})}=R_{i\times(\frac{1}{a})}R_{i\times(a)}=I$.
   \end{proposition}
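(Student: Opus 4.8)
The plan is to prove the identity entirely by diagrammatic rewriting, taking the representation of $R_{i\times(a)}$ from Theorem~\ref{rowitimesatm} as the starting point. Since the statement is an equation of $2^m\times 2^m$ matrices, completeness of algebraic ZX-calculus already guarantees that a diagrammatic proof exists, and at the level of matrices the claim is mere $\operatorname{diag}$-type bookkeeping ($R_{i\times(a)}$ scales the $i$-th coordinate by $a$, and $a\cdot\frac1a = 1$); so the real content, in keeping with the paper's aim of proving such facts by diagrams, is to exhibit the rewriting. I would also observe at the outset that the two asserted equalities are symmetric under the swap $a\leftrightarrow\frac1a$, so it suffices to rewrite $R_{i\times(a)}R_{i\times(\frac1a)}$ to the identity and invoke symmetry for the reverse order.

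First I would stack the two diagrams for $R_{i\times(a)}$ and $R_{i\times(\frac1a)}$, which share identical addressing data since they act on the same row $i$: the same binary digits $a_k$ determine which of the $m$ wires carry the $\bar a_k\pi$ red spiders feeding the AND gate. In the composite, the lower half of the upper diagram meets the upper half of the lower diagram, so two copies of the same selection circuitry (the taps onto the wires, the $\bar a_k\pi$ red spiders, and the AND gate) confront each other. I would collapse this meeting region using spider fusion (S1) together with the copy and bialgebra rules (B3) and (Brk): the tapped green spiders on each data wire fuse, and the doubled AND structure merges into a single addressing block, thereby isolating the two scalar-carrying green spiders.

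Next, those two now-adjacent green spiders, carrying $a$ and $\frac1a$, fuse by (S1); in the algebraic convention the labels multiply, giving a single green spider carrying $a\cdot\frac1a = 1$. A green spider labelled $1$ is the identity on its wire, so the control becomes trivial. Finally I would run the addressing-collapse step: with a trivial scalar the selection block no longer acts, and unwinding the AND gate and the triangles through (Bas0), (Bas1), (Inv) and (Suc), together with (S1), returns a bare bundle of $m$ identity wires, that is, $I$.

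The step I expect to be the genuine obstacle is precisely this collapse of the addressing/AND circuitry once the scalar is trivial: fusing the two labels to $1$ is immediate, but showing that the surrounding triangle-and-AND structure (which encodes the conjunction ``$x = i$'') reduces cleanly to the identity requires chaining the triangle rules (Inv), (Suc), (Bas0), (Bas1) and the copy rules (B3), (Brk) in the correct order, exactly as in the verification underlying Theorem~\ref{rowitimesatm}. Once that reduction is established, the reverse composition $R_{i\times(\frac1a)}R_{i\times(a)} = I$ follows by the identical rewriting with $a$ and $\frac1a$ interchanged, which completes the proof that $R_{i\times(a)}^{-1}=R_{i\times(\frac1a)}$.
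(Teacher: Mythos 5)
Your proposal takes essentially the same route as the paper: the paper's proof likewise stacks the two diagrams from Theorem~\ref{rowitimesatm} and rewrites the composite to the identity, fusing the wire spiders and merging the duplicated addressing structure (via the $\pi$-copy and bialgebra lemmas, Lemmas~\ref{pigatecopy} and~\ref{generalbialgebra}, together with (S1)) so that the two scalar spiders multiply to $a\cdot\frac{1}{a}=1$ and the remaining AND/triangle circuitry collapses to bare wires. Your symmetry observation ($a\leftrightarrow\frac{1}{a}$) is a clean substitute for the paper's ``the other order can be proved similarly,'' so the only thing missing relative to the paper is the explicit rewrite chain itself, which you correctly identify as the technical core.
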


 \begin{proposition}\label{additioninv}  
  $ R_{i\times(a)+ j}^{-1}=R_{i\times(-a)+ j}, i.e.,  R_{i\times(-a)+ j}R_{i\times(a)+ j}=R_{i\times(a)+ j}R_{i\times(-a)+ j}=I$.
   \end{proposition}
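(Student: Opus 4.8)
The plan is to establish both equalities $R_{i\times(-a)+ j}R_{i\times(a)+ j}=I$ and $R_{i\times(a)+ j}R_{i\times(-a)+ j}=I$ purely by diagrammatic rewriting, starting from the representation in Theorem~\ref{rowitimeaplusjtm}. Since the diagram for $R_{i\times(-a)+ j}$ is obtained from that of $R_{i\times(a)+ j}$ just by replacing the scalar $a$ with $-a$, proving the single identity $R_{i\times(-c)+ j}R_{i\times(c)+ j}=I$ for a generic scalar $c$ yields both stated equalities at once (instantiate $c=a$ and $c=-a$, using $-(-a)=a$). Hence it suffices to rewrite one stacked diagram to $I$.

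First I would compose the two diagrams vertically. The diagram for $R_{i\times(c)+ j}$ splits into a \emph{control} part --- the AND gate fed by the wires on which $i$ and $j$ agree, together with the red/green $\pi$-spiders arranged so that it fires exactly on the bit pattern of $i$ --- and an \emph{action} part on the differing wires $j_1,\dots,j_s$, which injects the scalar $c$. Because $i$ and $j$ are held fixed, the control parts of $R_{i\times(c)+ j}$ and $R_{i\times(-c)+ j}$ are \emph{literally the same subdiagram}; only the scalars $c$ and $-c$ in the action parts differ.

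Second, I would merge the two stacked control structures into a single one. Since the control's output is shared through a green (Z) spider, which is a copyable Frobenius comonoid, I can fuse the two layers with (S1) and collapse the doubled AND-gate/$\pi$-spider detection using the copy and bialgebra rules (B1)--(B3) together with (Brk); the wires on which $i$ and $j$ agree are thereby shown to thread through both layers unchanged. What then remains is, on the differing wires under one shared control, the composite of the $c$-action followed by the $(-c)$-action. At the level of the two-dimensional subspace spanned by $e_i$ and $e_j$ this composite is $\left(\begin{smallmatrix}1&0\\-c&1\end{smallmatrix}\right)\left(\begin{smallmatrix}1&0\\c&1\end{smallmatrix}\right)=I$, and diagrammatically the two triangle-and-scalar gadgets are brought together so that each triangle meets its inverse via (Inv), the scalars combine to $c+(-c)=0$, and the resulting trivial contribution is deleted with (Zero) and (Ept), leaving bare wires. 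Combined with the collapse of the control part to the identity on its own wires, the whole diagram rewrites to $I$.

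The hard part will be the second step: verifying in full generality --- arbitrarily many agreeing and differing bit positions, and an arbitrary agreement pattern --- that the two stacked controls genuinely merge, with no residual wire left between the layers and with every agreeing wire passing through untouched. The scalar cancellation of the third step is comparatively routine once the single shared control has been isolated.
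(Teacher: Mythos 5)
Your overall strategy coincides with the paper's: stack the diagram of Theorem~\ref{rowitimeaplusjtm} for the $-a$ version onto the one for the $a$ version, rewrite the composite to the identity, and note that a single generic computation handles both orders (the paper likewise proves one order and says the other ``can be proved similarly''). However, two of your concrete steps do not hold up as stated. The decisive one is the scalar cancellation: none of the rules you invoke --- (S1), (B1)--(B3), (Brk), (Inv), (Zero), (Ept) --- can combine green boxes labelled $c$ and $-c$ into a box labelled $c+(-c)=0$. Spider fusion (S1) composes labels \emph{multiplicatively} (it would produce $-c^{2}$), and additive combination of labels is exactly the content of the derived addition lemma, Lemma~\ref{propadprimecro} from \cite{qwangnormalformbit}. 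That lemma, together with the $\pi$-copy Lemma~\ref{pigatecopy}, the generalized bialgebra lemmas and the triangle-Hopf Lemma~\ref{trianglehopflip}, is the toolkit the paper's displayed rewriting draws on; indeed the paper records the stronger statement $R_{i\times(a)+j}R_{i\times(b)+j}=R_{i\times(a+b)+j}$ as following ``directly from Lemma~\ref{pigatecopy} and Lemma~\ref{propadprimecro}''. Relatedly, your appeal to (Inv) would fail as described: in the composite, each triangle is separated from the candidate inverse triangle by the scalar gadgets and the control wiring, so the two are never adjacent and (Inv) cannot fire. The cancellation has to go through addition-to-zero, after which (Zero) and the copy rules collapse what remains.

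Second, the step you explicitly defer as ``the hard part'' --- commuting the second copy of the control/detection structure past the first copy's action so that the two controls merge --- is not routine bookkeeping; it is essentially the whole content of the paper's proof, and the appendix supplies the dedicated derived lemmas (Lemma~\ref{pigatecopy} and the generalized bialgebra lemmas) precisely because the primitive rules (B1)--(B3) and (Brk) alone do not carry it out in the general case of arbitrary agreeing and differing bit positions. Since your proposal leaves that step unexecuted and lacks the addition lemma needed for the cancellation, it is a reasonable outline of the correct strategy, but it is not yet a proof.
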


  \begin{proposition}\label{sawpinv}   
  $ R_{i\leftrightarrow j}^{-1}=R_{i\leftrightarrow j}, i.e.,  R_{i\leftrightarrow j}R_{i\leftrightarrow j}=I$.
   \end{proposition}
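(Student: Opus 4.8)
The plan is to compose the diagram of Theorem~\ref{swapmatrix} with a second copy of itself and rewrite the composite down to $m$ parallel identity wires. Conceptually, $R_{i\leftrightarrow j}$ is a \emph{controlled flip}: the detection network assembled from the AND gate and the red $\overline{(a_{j_1}\oplus a_{j_x})}\pi$ spiders tests whether the incoming basis vector is $e_i$ or $e_j$ --- that is, whether the $m-s$ fixed bits carry the common pattern of $i$ and $j$ and the $s-1$ parities between wire $j_1$ and the wires $j_2,\dots,j_s$ are consistent, which is exactly the $m-1$ conditions fed into the AND gate --- and on that input it flips the bits in positions $j_1,\dots,j_s$, acting trivially on every other basis vector. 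Since the unordered pair $\{i,j\}$ is preserved by the flip, the detection network of the second copy must fire on exactly the same subspace as the first, so the two flips cancel; converting this observation into a rewrite derivation with the rules of Figure~\ref{figurealgebra} is the substance of the proof.

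First I would stack the two diagrams and dispose of the ``passive'' part, namely the wires $j_2,\dots,j_s$ that are linked to the green spider on wire $j_1$ only through their $\overline{(a_{j_1}\oplus a_{j_x})}\pi$ red spiders. Where two such phase spiders from consecutive copies meet they fuse by (S1); because each label is $0$ or $\pi$ and the identical label occurs in both copies, the fused phase is $0$ or $2\pi\equiv 0$ and the spider is erased, contributing only an identity. This reduces the problem to the control-and-target core built around the AND gate and wire $j_1$.

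The heart of the argument, and the step I expect to be hardest, is to commute the detection network of the second copy past the flips performed by the first. Here I would use the bialgebra and copy rules (B1), (B2), (B3), the break rule (Brk), and the defining relations \eqref{andshortnotationeq} and \eqref{andshortnoterelat} for the AND short-notation to show that the controlled flips of the first copy slide through the red $\pi$ adjusters and re-enter the AND gate of the second copy with matching polarity. This is precisely where the exponents $a_{j_1}\oplus a_{j_x}$ earn their keep: those phases are chosen so that the detection network is invariant under simultaneously flipping all of $j_1,\dots,j_s$, since each parity $j_1\oplus j_x$ is unchanged and the fixed bits are untouched. Once the two copies are made to share a single control value, the paired flips on each wire $j_k$ fuse by (S1) with total phase $2\pi\equiv 0$ and vanish, and the detection network, now feeding nothing, collapses via (Ept) together with the spider rules (S1)--(S2), leaving the $m$ bare identity wires.

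It will be cleanest to run this first in the base case $s=1$ of the remark, where $R_{i\leftrightarrow j}$ is a single multi-controlled NOT whose controls are untouched by its target flip, so that self-inverseness drops out immediately from the two-NOTs-cancel fusion; the general case then follows by the commutation step above, with the special values $s=1$ and $m=s=2$ recorded in the remark serving as consistency checks on the bit bookkeeping.
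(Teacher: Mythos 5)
Your overall strategy---compose two copies of the diagram from Theorem~\ref{swapmatrix}, exploit the fact that the detection network is invariant under simultaneously flipping wires $j_1,\dots,j_s$, and cancel the paired flips---is the same in spirit as the paper's diagrammatic proof, and your explanation of why the parity phases $\overline{(a_{j_1}\oplus a_{j_x})}\pi$ make the second copy fire on exactly the same inputs as the first is the right conceptual reason. However, two of your concrete rewrite steps are wrong and the hardest step is left as a gesture. First, the ``passive part'' disposal fails structurally: the red spiders labelled $\overline{(a_{j_1}\oplus a_{j_x})}\pi$ are not on the through-wires; each one hangs off the parity link joining wire $j_x$, the green spider on wire $j_1$, and that copy's AND gate. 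In the composite diagram, what lies between copy~1's and copy~2's green spiders on wire $j_x$ is copy~1's controlled flip (a red spider wired to the AND output), so the two parity spiders from consecutive copies are never adjacent and cannot fuse by (S1). Second, once the controls are identified, the paired flips on a target wire do not ``fuse by (S1) with total phase $2\pi\equiv 0$'': the controlled flips are phase-free red spiders each carrying a leg to a control, so (S1)-fusion leaves a red spider with a doubled connection to the control, and removing that requires the Hopf-type cancellation (B1) together with the triangle/AND lemmas (in the paper, Lemma~\ref{trianglehopflip} and Lemma~\ref{propadprimecro}); a $\pi$ phase would arise only if the control were fixed in the state $\ket{1}$.

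The step you defer---``once the two copies are made to share a single control value''---is precisely the crux, and it is not supplied. Commuting copy~2's detection spiders past copy~1's controlled flips via (B2) creates new edges from copy~1's AND output into copy~2's parity computations; one must then show these cancel in pairs (each parity receives the flip on both $j_1$ and $j_x$), and separately that the two AND gates, now fed identical copied inputs, can be merged so that both layers of flips hang off one control wire. The paper does exactly this work explicitly: it first establishes an equivalent form of $R_{i\leftrightarrow j}$ (Lemma~\ref{rowitiswapjequivlm}) and then runs the composition $R_{i\leftrightarrow j}R_{i\leftrightarrow j}$ through that form, using Lemma~\ref{pigatecopy}, the general bialgebra lemmas, Lemma~\ref{propadprimecro} and Lemma~\ref{trianglehopflip} to perform the merge and cancellation. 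Your citation of (B1), (B2), (B3), (Brk) and the AND-notation equations points at the right toolbox, but as written the derivation does not go through: the two incorrect steps above would derail it, and the control-merging argument that the paper's auxiliary lemmas exist to provide is missing.
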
   
  
  \begin{proposition}\label{multiplytranp}    
  $ R_{i\times(a)}^T=R_{i\times(a)}$.
   \end{proposition}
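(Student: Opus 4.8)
The plan is to establish the identity diagrammatically, starting from the representation of $R_{i\times(a)}$ supplied by Theorem~\ref{rowitimesatm}. In algebraic ZX-calculus the caps and cups $C_a$ and $C_u$ equip the category with a self-dual compact structure, and the transpose $D^{T}$ of a diagram $D$ is obtained by bending every input down to an output and every output up to an input along these caps and cups; graphically this is exactly a $180^\circ$ rotation of $D$. Since the caps and cups are taken in the computational basis (their interpretations are the Bell effect and the Bell state, with no conjugation), this diagrammatic transpose agrees with the matrix transpose, so it suffices to show that the diagram of Theorem~\ref{rowitimesatm} is invariant under the $180^\circ$ flip.

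First I would record the transpose behaviour of the constituent generators. A green spider satisfies $\left(R_{Z,a}^{(n,m)}\right)^{T}=R_{Z,a}^{(m,n)}$: bending its legs around $C_a,C_u$ merely interchanges inputs and outputs while leaving the phase $a$ untouched, which is immediate from the fusion rule (S1) together with the compact-structure rule (S3). The red spiders transpose in the same way, and bare wires are self-transpose. Consequently, transposing the whole representation reduces to checking that its wiring pattern is symmetric under the flip and that the central phase node carrying $a$ is sent to itself.

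Next I would verify this symmetry directly. The representation isolates the index $i$ through its binary expansion $i=\sum_{k}a_{k}2^{k}$, attaching to wire $k$ the control datum determined by $a_k$ (or by $\bar{a}_k$), with the phase-$a$ green node placed between the input and output halves. Because precisely the same bit pattern $(a_k,\bar{a}_k)$ governs both halves, the flip carries each control node to its counterpart on the opposite side and fixes the central phase node; this is the diagrammatic incarnation of the equality $R_{i\times(a)}=C_{i\times(a)}$ recorded when the elementary matrices were introduced. Folding the bent wires back with (S3) and fusing with (S1) then returns the original diagram, yielding $R_{i\times(a)}^{T}=R_{i\times(a)}$.

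The step I expect to be the main obstacle is the bookkeeping of the control (AND-gate) subdiagram that selects row $i$: one must confirm that bending its wires introduces no spurious Hadamard boxes, swaps, or phases, and that this selector is itself transpose-symmetric rather than merely symmetric up to a permutation of legs. A convenient way to contain the bookkeeping is to dispatch the single-qubit case $m=1$ first, where Theorem~\ref{rowitimesatm} collapses to the elementary diagram in the following Remark and the self-transpose property is visible at a glance, and then to extend to general $m$ by exploiting the same local flip-symmetry at each wire, the control data attached to the individual wires being fixed by the binary expansion of $i$.
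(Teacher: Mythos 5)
There is a genuine gap, and it is the one you flagged yourself without closing. The claim that the diagram of Theorem~\ref{rowitimesatm} ``is invariant under the $180^\circ$ flip'' cannot be read off from the self-transposition of spiders, because that diagram is not built from spiders and plain wires alone. The selector that singles out row $i$ is an AND gadget (the shorthand of Eq.~(\ref{andshortnotationeq})) assembled from the triangle generators $T$ and $T^{-1}$, and the triangle is \emph{not} self-transpose: $T=\left(\begin{smallmatrix}1&1\\0&1\end{smallmatrix}\right)$ has $T^{T}=\left(\begin{smallmatrix}1&0\\1&1\end{smallmatrix}\right)\neq T$, so the $180^\circ$ rotation turns every triangle upside down. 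Moreover the gadget is structurally one-sided: green spiders on the $m$ wires feed, through the red $\bar{a}_k\pi$ nodes, an $m$-input AND whose single output is closed off by the $a$-labelled green node. Its rotation is therefore a genuinely different diagram --- merge-spiders fed by a $1$-input, $m$-output upside-down AND whose input is the $a$-labelled green state --- and there are no ``two halves governed by the same bit pattern'' for the flip to exchange. Your argument thus reduces the proposition to the assertion that this selector gadget is transpose-symmetric, which is exactly the statement to be proved, not a symmetry visible ``at a glance''.

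What is missing is the rewriting that converts the rotated diagram back into the original one; that chain of rewrites is the entire content of the paper's proof, and it relies on machinery developed precisely for flipped triangles and flipped AND structures (for instance the rule (Sym) of Figure~\ref{figurealgebra} and Lemma~\ref{trianglehopflip}). Your fallback strategy does not supply this. The case $m=1$ is indeed immediate, since no AND gate occurs there and the diagram (a green $a$-node flanked on both sides by the same red $\bar{a}_0\pi$ nodes) is manifestly symmetric; but for $m\geq 2$ the AND gate couples all the wires, so there is no ``local flip-symmetry at each wire'' to extend, and the induction you sketch never engages the triangles at all.
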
   
   
   \begin{proposition}\label{additiontranp} 
  $ R_{i\times(a)+ j}^T=R_{j\times(a)+ i}$.
   \end{proposition}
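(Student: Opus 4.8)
The plan is to give a diagrammatic proof that stays inside the calculus and reuses the explicit forms of Theorem~\ref{rowitimeaplusjtm}, exploiting that the generators $C_a$ and $C_u$ equip the qubit wires with a self-dual compact structure, so that the transpose of any diagram is computed by bending its input and output wires around these cups and caps. As a matrix-level sanity check (which the graphical proof must reproduce), note that $R_{i\times(a)+ j}=I+a\,e_j e_i^{T}$, so
\[
R_{i\times(a)+ j}^{T}=I+a\,e_i e_j^{T}=R_{j\times(a)+ i},
\]
since transposing moves the off-diagonal entry $a$ from position $(j,i)$ to position $(i,j)$. The whole content is therefore to realise this relocation of the $a$-entry graphically.

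First I would record how the building blocks behave under the wire-bending transpose. The green and red spiders $R_{Z,a}^{(n,m)}$ and the Hadamard box $H$ are symmetric, hence invariant; the only asymmetric generator is the triangle $T$, whose transpose is the flipped triangle. Writing $X=\begin{pmatrix}0&1\\1&0\end{pmatrix}$ for the red $\pi$ spider, I would isolate the one-line identity
\[
T^{T}=\begin{pmatrix}1&0\\1&1\end{pmatrix}=X\,T\,X,
\]
i.e.\ the transpose of the triangle is obtained by conjugating it with red $\pi$ spiders; this is immediate at matrix level and can be derived inside the calculus from rule (Inv) together with the spider and copy rules.

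Next I would apply the wire-bending transpose to the explicit diagram of $R_{i\times(a)+ j}$ from Theorem~\ref{rowitimeaplusjtm}. Because the spiders and the AND-gadget are assembled from symmetric pieces, the reflection only interchanges the side that reads the bit pattern of $i$ with the side that writes the bit pattern of $j$, while the triangle labelled $a$ is replaced by its flip $X\,T\,X$. The red $\pi$ spiders created by this flip toggle precisely the wires at the differing bit positions $j_1,\dots,j_s$; after fusing spiders with (S1) and pushing these $\pi$'s through the green nodes using the copy and bialgebra rules (B1)--(B3), (Bas0), (Bas1), the detection keyed on the bits $a_{j_k}$ turns into the detection keyed on $\bar a_{j_k}$. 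This complementation at the differing positions is exactly what distinguishes the diagram of $R_{j\times(a)+ i}$ from that of $R_{i\times(a)+ j}$, so the reflected diagram coincides with $R_{j\times(a)+ i}$.

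The main obstacle I anticipate is the combinatorial bookkeeping of the AND-gadget and of the relocated $\pi$ spiders under reflection: I must check that bending the wires and moving the $\pi$'s really produces the pattern-$j$ gadget with exactly $m-1$ controls and the correctly complemented bits, without leaving behind stray scalars or extra nodes. I would treat the degenerate case $s=1$ (where $a_{j_1}=a_{j_s}$ makes the relevant toggle trivial) separately, in the spirit of the remark following Theorem~\ref{swapmatrix}, and would verify that the normalisation of the flipped triangle matches the generator convention so that no spurious (Ept)/(EU) scalars survive.
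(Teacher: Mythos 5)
Your overall strategy is the same as the paper's: transpose the explicit diagram of Theorem~\ref{rowitimeaplusjtm} by bending wires, then rewrite the reflected diagram into the Theorem~\ref{rowitimeaplusjtm} diagram for the pair $(j,i)$, which --- as you correctly observe --- differs from the original only in that the detection phases at the differing bit positions are complemented ($\bar{a}_{j_k}\pi$ becomes $a_{j_k}\pi$). Your matrix-level identity $R_{i\times(a)+j}=I+a\,e_je_i^{T}$ is correct, as is the triangle identity $T^{T}=XTX$, and the green/red spiders and $H$ are indeed transpose-invariant. So the skeleton of the argument matches the paper's appendix proof.

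However, the mechanism you propose for the rewrite has a genuine gap. The claim that ``the spiders and the AND-gadget are assembled from symmetric pieces'' is false: the multi-input AND shorthand of equation~\eqref{andshortnotationeq} is built precisely from triangles (and $T^{-1}$), the asymmetric generators, so under wire-bending it becomes a co-AND (one input, $m$ outputs) rather than staying an AND. Converting that reflected gadget back into a genuine AND keyed on the complemented bits is the actual content of the paper's rewrite (done with the $\pi$-copy and bialgebra lemmas, Lemmas~\ref{pigatecopy} and~\ref{generalbialgebra}, and the triangle/AND flip identities); it does not follow from any symmetry. Moreover, the two $\pi$'s produced by $T^{T}=XTX$ cannot perform the complementation you ascribe to them: the detection spiders $\bar{a}_k\pi$ sit on the far side of the reflected AND, and a red $\pi$ does not propagate through an AND (negating all inputs of an AND does not negate its output); on the other side, the $\pi$ would have to cross the $a$-labelled green node, with which it does not commute, since $X\,\mathrm{diag}(1,a)=\mathrm{diag}(a,1)\,X$, i.e.\ the parameter inverts up to a scalar. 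A smaller sign of the same confusion is your plan to special-case $s=1$ ``in the spirit of the remark following Theorem~\ref{swapmatrix}'': the $\overline{(a_{j_1}\oplus a_{j_s})}\pi$ phases and the $(m-1)$-input AND belong to the row-\emph{switching} diagram, not the row-addition one, where the AND has $m$ inputs and nothing degenerates at $s=1$. Your matrix computation does prove the proposition semantically, but since the point of Section~\ref{properties} is derivation inside the calculus, the diagrammatic argument as sketched does not go through; what you defer as ``combinatorial bookkeeping'' of the AND-gadget is exactly the crux that the paper's two figure-chains resolve.
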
    
    
   \begin{proposition}\label{sawptranp}  
  $ R_{i\leftrightarrow j}^T=R_{i\leftrightarrow j}$.
   \end{proposition}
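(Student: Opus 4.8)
The plan is to give a direct diagrammatic proof that the diagram of Theorem~\ref{swapmatrix} is invariant under the compact-closed transpose, and to record a semantic argument as an independent check. Recall that the transpose of a diagram $D\colon m\to m$ is realised graphically by bending its inputs to outputs and its outputs to inputs with the caps $C_a$ and cups $C_u$, i.e.\ by rotating the diagram through $180^\circ$. First I would record how each generator occurring in $R_{i\leftrightarrow j}$ behaves under this rotation: the green spiders $R_{Z,a}$, the red ($\pi$) spiders, the Hadamard $H$ and the swap $\sigma$ are all invariant (a rotated spider is again a spider of the same phase by (S1)--(S3), while $H$ and $\sigma$ are symmetric matrices), so they return to themselves up to a rearrangement of legs that the spider and swap laws absorb.

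The single generator that is \emph{not} self-transpose is the triangle $T$, whose transpose is the downward triangle; since the AND gates in Theorem~\ref{swapmatrix} are built from triangles, forming $R_{i\leftrightarrow j}^T$ leaves the entire spider structure fixed but reverses the orientation of the AND gate. The heart of the proof is therefore to show that this flipped AND-gate block, sitting inside the unchanged red/green control structure that carries the phases $\overline{(a_{j_1}\oplus a_{j_x})}\pi$, rewrites back to the original block. I would carry this out with the algebraic rules governing the triangle and the AND gate, namely (Brk), (Bas0) and (Bas1) together with (Aso), (Sym) and (Pcy), in the same style as the proofs of Propositions~\ref{multiplytranp} and~\ref{additiontranp}: push the bent caps and cups through the spiders until the rotated diagram is literally the diagram of Theorem~\ref{swapmatrix} again. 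The main obstacle is exactly this step, keeping track of which control wires the bent wires reattach to and verifying that the $\overline{(a_{j_1}\oplus a_{j_x})}\pi$ phases are reproduced with the correct parities once the AND gate has been flipped; the special cases $s=1$ and $m=s=2$ of the accompanying remark should be handled separately, as they collapse some of these phases.

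As an independent check, valid over any commutative semiring by the completeness results cited in the introduction, I would observe that $\llbracket R_{i\leftrightarrow j}\rrbracket$ is the permutation matrix of the transposition of $i$ and $j$: it is the identity with its $i$-th and $j$-th rows and columns interchanged, and such a matrix is manifestly symmetric. Since the standard interpretation is a compact-closed functor, $\llbracket R_{i\leftrightarrow j}^T\rrbracket=\llbracket R_{i\leftrightarrow j}\rrbracket^T=\llbracket R_{i\leftrightarrow j}\rrbracket$, so completeness of algebraic ZX-calculus delivers $R_{i\leftrightarrow j}^T=R_{i\leftrightarrow j}$ without touching the AND gate at all. This both confirms the target equality and guarantees in advance that the rewriting of the previous paragraph must terminate successfully; it also dovetails with Proposition~\ref{sawpinv}, since for a permutation matrix $P$ one has $P^T=P^{-1}$ and a transposition is its own inverse.
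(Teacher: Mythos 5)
Your primary, diagrammatic route is essentially the paper's own proof: the paper likewise forms $R_{i\leftrightarrow j}^T$ by rotating the diagram of Theorem~\ref{swapmatrix} and rewrites it back, in exactly the style of Propositions~\ref{multiplytranp} and~\ref{additiontranp}. The device you are missing --- and it is precisely what resolves the ``main obstacle'' you flag --- is Lemma~\ref{rowitiswapjequivlm}, which the paper proves first: it recasts $R_{i\leftrightarrow j}$ in an equivalent form, with the control phases redistributed as $b_x=\overline{(a_{j_1}\oplus a_{j_x})}\,$, that matches the shape produced by the $180^{\circ}$ rotation. The transposed diagram is then rewritten into \emph{that} form and identified with $R_{i\leftrightarrow j}$ by the lemma, rather than being pushed all the way back to the original picture. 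Without some such intermediate target, ``push the caps and cups through until it is literally the original diagram'' is exactly where the bookkeeping of the flipped AND gates and of the $\overline{(a_{j_1}\oplus a_{j_x})}\pi$ phases lives, so your sketch names the hard step but does not discharge it.

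Your second argument --- interpret the diagram, observe that the transposition matrix is symmetric, and invoke completeness of algebraic ZX-calculus \cite{qwangnormalformbit} --- is correct and genuinely different from anything in the paper, and it does fully establish the proposition (the interpretation respects compact-closed transposition, and any two diagrams with equal interpretations are provably equal). What it buys is brevity and certainty, and it generalises to commutative semirings; what it costs is that it produces no explicit rewrite sequence, which is the stated point of Section~\ref{properties} (``prove by diagrams''), outsourcing all the work to the completeness theorem. As a standalone proof of the statement it is fine; as a replacement for the paper's diagrammatic proof it misses the purpose of the exercise.
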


   \begin{proposition} 
  $ R_{i\times(a)}R_{i\times(b)}=R_{i\times(ab)}$.
   \end{proposition}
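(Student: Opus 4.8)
The plan is to work entirely diagrammatically, starting from the representation of row multiplication established in Theorem~\ref{rowitimesatm}. I would first draw the composite: the diagram for $R_{i\times(a)}$ stacked on top of the diagram for $R_{i\times(b)}$. On each of the $m$ wires the representation of Theorem~\ref{rowitimesatm} consists of a \emph{selector} built from green and red spiders encoding the bit pattern $a_{m-1}\cdots a_0$ of $i$ (feeding an AND gate), together with a single green spider on the target wire carrying the multiplicative parameter. Since both factors use exactly the same index $i$, the selector sub-diagrams are identical in the two copies, and on the non-target wires the composite reduces immediately to the structure of a single copy.

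The heart of the argument is the target wire, where the two copies contribute a green spider labelled $a$ and a green spider labelled $b$, each gated by the identical selector. My plan is to merge the two selectors first: using the copy rules $(B1)$--$(B3)$ together with spider fusion $(S1)$, the two identical control structures collapse into one, because the bit pattern that picks out row $i$ is the same in both factors. This leaves a single selector controlling two consecutive green spiders, with parameters $a$ and $b$, lying on the same wire.

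Those two green spiders are then fused by the rule $(S1)$. Under the standard interpretation a green spider realises $\ket{0}^{\otimes m}\bra{0}^{\otimes n} + a\,\ket{1}^{\otimes m}\bra{1}^{\otimes n}$, so composing a parameter-$a$ spider with a parameter-$b$ spider gives $\ket{0}\bra{0} + ab\,\ket{1}\bra{1}$, i.e.\ the parameters multiply to $ab$. The resulting diagram is precisely the representation of $R_{i\times(ab)}$ from Theorem~\ref{rowitimesatm}, which closes the proof.

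I expect the main obstacle to be the bookkeeping in the first merging step, namely showing rigorously that the two selector sub-diagrams combine into a single selector without disturbing the target structure; this is exactly where the copy/bialgebra rules and spider fusion must be applied carefully to the $\bar{a}_k$ pattern wires. Once this idempotency of the control is established, the parameter fusion by $(S1)$ is immediate. As a sanity check, reading both sides off at the level of matrices confirms the target of the rewrite: $R_{i\times(a)}R_{i\times(b)}$ scales row $i$ first by $a$ and then by $b$, hence by $ab$, which agrees with $R_{i\times(ab)}$.
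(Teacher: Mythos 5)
Your proposal is correct and takes essentially the same route as the paper: the paper's one-line proof cites exactly the ingredients you describe, namely the $\pi$-gate copy lemma (Lemma~\ref{pigatecopy}) and the generalized bialgebra lemma (Lemma~\ref{generalbialgebra}) to collapse the two identical selectors, plus spider fusion (S1) to merge the parameter spiders so that $a$ and $b$ multiply to $ab$. The only cosmetic difference is that you invoke the primitive rules (B1)--(B3) where the paper uses their arbitrary-arity packaged versions, which is precisely what makes your ``bookkeeping'' step go through for general $m$.
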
  
This follows directly from  Lemma \ref{pigatecopy}, Lemma \ref{generalbialgebra} and the rule (S1).
   \begin{proposition} 
  $ R_{i\times(a)+ j}R_{i\times(b)+ j}=R_{i\times(a+b)+ j}$.
   \end{proposition}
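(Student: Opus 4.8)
The plan is to mirror the proof of the preceding proposition $R_{i\times(a)}R_{i\times(b)}=R_{i\times(ab)}$, exploiting that the two factors $R_{i\times(a)+ j}$ and $R_{i\times(b)+ j}$ share an identical control part (they have the same $i$ and $j$, hence the same differing bit positions $j_1,\dots,j_s$) and differ only in the scalar carried on the target wire. First I would compose vertically the two diagrams supplied by Theorem \ref{rowitimeaplusjtm}. Since the AND-gate together with the green-spider detectors sitting on the bits of $i$ is syntactically the same in both copies, the composite splits into a doubled control region stacked on top of two serial scalar gadgets, one holding $a$ and one holding $b$.

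Next I would collapse the doubled control region into a single copy. This is the routine part that reuses exactly the ingredients cited for the multiplication case: Lemma \ref{pigatecopy} copies the $\pi$/AND-gate structure through the adjacent spiders, Lemma \ref{generalbialgebra} contracts the duplicated wiring via the bialgebra relation, and the rule (S1) fuses the resulting green spiders. After this step the two control parts have merged, and all that remains on the target wire is the series composition of the gadget encoding $a$ followed by the gadget encoding $b$.

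The final and genuinely new step is to show that these two serial scalar gadgets add, i.e. that the $m=1$ identity $\begin{pmatrix}1&a\\0&1\end{pmatrix}\begin{pmatrix}1&b\\0&1\end{pmatrix}=\begin{pmatrix}1&a+b\\0&1\end{pmatrix}$ holds diagrammatically. This is where the two cases diverge: for multiplication the two scalar spiders are diagonal and fuse in one stroke by (S1), whereas here the scalars sit on the off-diagonal through the triangle, so their combination into $a+b$ must be driven by the triangle's additive structure via the rule (Suc). I expect this additive combination of the triangle gadgets to be the main obstacle, the control-merging steps being a near-verbatim repeat of the multiplication argument. The underlying reason the two triangles combine so cleanly is the nilpotency $E_{ji}^2=0$ for $i\neq j$, which diagrammatically manifests as the absence of any cross-term and leaves only the sum $a+b$.
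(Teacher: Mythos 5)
Your overall route is the same as the paper's: compose the two diagrams supplied by Theorem \ref{rowitimeaplusjtm}, merge the two identical control structures, and then combine the two scalar gadgets on the target wires into a single gadget carrying $a+b$. The paper disposes of the whole proposition in one line, citing Lemma \ref{pigatecopy} (to copy the $\pi$-labelled control structure through and merge the controls) together with Lemma \ref{propadprimecro}, which is imported from \cite{qwangnormalformbit} and is precisely the diagrammatic statement that two row-addition gadgets in series fuse into one whose label is the sum of the labels.

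The gap is in your final step, the very one you flag as the ``main obstacle''. You assert that the combination of the two triangle gadgets into $a+b$ ``must be driven by \ldots the rule (Suc)''. But (Suc) only encodes the successor operation $a \mapsto a+1$; it cannot by itself fuse two gadgets carrying arbitrary labels $a$ and $b$ into one carrying $a+b$. Deriving general two-scalar addition from the rule set is a genuinely non-trivial piece of the completeness machinery of algebraic ZX-calculus --- it is exactly what Lemma \ref{propadprimecro} packages, and its derivation in \cite{qwangnormalformbit} involves more than (Suc). Your closing remark that nilpotency ($E_{ji}^2=0$, hence no cross term) explains the clean combination is correct as semantic intuition about the matrices, but it is not a diagrammatic rewrite, and these propositions are meant to be established by rewriting. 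So your proposal is structurally aligned with the paper's, but the one step you identify as genuinely new is precisely the step you have not supplied, and the rule you name for it is insufficient on its own.
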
  
  This follows directly from  Lemma \ref{pigatecopy} and Lemma \ref{propadprimecro}.
  
\section{Represent arbitrary matrices by string diagrams}    
In this section we show how to represent an arbitrary matrix $A$ of size $2^m\times 2^n$ as ZX diagrams. An implementation of this method is available on GitHub\footnote{https://github.com/y-richie-y/qpl-represent-matrix}. The idea is to use elementary transformations to turn  $A$  into a simple matrix which can be easily represented in ZX, then we apply the inverse operations diagrammatically to get the diagram for $A$.

As is well known in linear algebra, any matrix $A$ can be turned into a reduced row echelon form by means of a finite sequence of elementary row operations. If we further allow elementary column operations to be used, then $A$ can be transformed to a standard form
$
C=  \begin{pmatrix}
        E_r & O\\
         O & O
          \end{pmatrix}_{2^m\times 2^n}
$
where $r$ is the rank of $A$, and $E_r$ is an identity matrix of order $r$. Below we show that  $C$ can be represented as a ZX diagram in the proof of the following theorem which is proved in the appendix.

  \begin{theorem}\label{anymatpresen} 
Any matrix $A$ of size $2^m\times 2^n$ can be represented by a ZX diagram.

   \end{theorem}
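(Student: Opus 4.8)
The plan is to combine the three representation theorems for elementary matrices with the classical fact that any matrix can be brought to rank-normal form by elementary row and column operations. By Gaussian elimination there are invertible matrices $P$ and $Q$, each a product of elementary matrices, with $PAQ = C$ where $C = \begin{pmatrix} E_r & O \\ O & O \end{pmatrix}_{2^m \times 2^n}$ and $r = \operatorname{rank}(A)$; equivalently $A = P^{-1} C Q^{-1}$. Here $P$ is a product of $2^m \times 2^m$ elementary matrices (row operations) and $Q$ a product of $2^n \times 2^n$ elementary matrices (column operations). So representing $A$ reduces to representing the three factors $P^{-1}$, $Q^{-1}$ and $C$ and then composing their diagrams: since the interpretation $\llbracket \cdot \rrbracket$ respects composition and tensor product, diagram composition realises matrix multiplication, and it suffices to represent each factor separately.

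First I would handle $P^{-1}$ and $Q^{-1}$. Writing $P = E_k \cdots E_1$ as a product of elementary matrices gives $P^{-1} = E_1^{-1} \cdots E_k^{-1}$, and by Propositions \ref{multiplicationinv}, \ref{additioninv} and \ref{sawpinv} the inverse of each elementary matrix is again an elementary matrix of the same type. Each such factor has an explicit ZX diagram by Theorems \ref{rowitimesatm}, \ref{rowitimeaplusjtm} and \ref{swapmatrix}, so $P^{-1}$ is represented by the composite of the corresponding $m \to m$ diagrams, and likewise $Q^{-1}$ by a composite of $n \to n$ diagrams. No new ideas are needed here beyond compositionality of the interpretation.

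The hard part will be representing the rank-normal form $C$ as an $n \to m$ diagram, because $r$ need not be a power of two: $C$ is then not a tensor product of single-qubit spiders, and a coordinate swap cannot fix this since subcubes have power-of-two size. I would build the square projector $\Pi_r = \operatorname{diag}(\underbrace{1, \ldots, 1}_{r}, 0, \ldots, 0)$ recursively on the number of qubits by splitting off the most significant qubit, writing $\Pi_r = \ketbra{0}{0} \otimes \Pi_{r_0} + \ketbra{1}{1} \otimes \Pi_{r_1}$ with $r_0 = \min(r, 2^{m-1})$ and $r_1 = r - r_0$; this is a controlled diagram whose two branches are projectors on fewer qubits, bottoming out at the single-qubit cases $\Pi_0$ (zero), $\Pi_1 = \llbracket\,$green spider with parameter $0\,\rrbracket$, and $\Pi_2$ (identity). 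The rectangular matrix $C$ is then obtained from the square projector on $\min(m,n)$ qubits by tensoring with $\ket{0}$-states or $\bra{0}$-effects to pad the input and output from $2^{\min(m,n)}$ up to $2^n$ and $2^m$, each of which is directly diagrammatic.

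Finally I would assemble the three pieces into the diagram $A = P^{-1} \circ C \circ Q^{-1}$, an $n \to m$ ZX diagram. The main obstacle, and the step deserving the most care in the appendix, is the concrete diagrammatic construction and verification of $\Pi_r$ (and hence $C$) for arbitrary $r$ via the controlled/summation machinery of algebraic ZX-calculus; everything else is bookkeeping with the already-established elementary-matrix diagrams and the compositionality of $\llbracket \cdot \rrbracket$.
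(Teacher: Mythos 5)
Your overall skeleton is exactly the paper's: bring $A$ to the rank-normal form $C = \begin{pmatrix} E_r & O \\ O & O \end{pmatrix}$ by elementary operations, invert those operations using Propositions \ref{multiplicationinv}, \ref{additioninv} and \ref{sawpinv} together with Theorems \ref{rowitimesatm}, \ref{rowitimeaplusjtm} and \ref{swapmatrix}, and compose $A = P^{-1} \circ C \circ Q^{-1}$ using compositionality of $\llbracket \cdot \rrbracket$. The handling of the rectangular shape also matches the paper, which writes $C$ as $(\underbrace{1\,0\cdots 0}_{2^{n-m}}) \otimes K$ (respectively $\ket{0}^{\otimes(m-n)} \otimes K'$), i.e.\ your padding by $\bra{0}$-effects and $\ket{0}$-states.

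The genuine gap is your treatment of the square projector $\Pi_r = \mathrm{diag}(1,\ldots,1,0,\ldots,0)$. You correctly observe that $\Pi_r$ is not a tensor product of single-qubit maps when $r$ is not a power of two, but you then propose a recursive controlled-branch construction ($\Pi_r = \ketbra{0}{0}\otimes\Pi_{r_0} + \ketbra{1}{1}\otimes\Pi_{r_1}$) that you never actually carry out in ZX, and you yourself flag it as the unresolved ``hard part.'' As written, the proof is incomplete precisely there: building such controlled diagrams (and controlled versions of controlled diagrams, as the recursion demands) requires nontrivial machinery that is not among the results you are allowed to invoke. The point you missed is that no such machinery is needed: Theorem \ref{rowitimesatm} explicitly permits the multiplier $a = 0$, so row multiplication by zero is itself an elementary-matrix diagram, and $\Pi_r = R_{(2^m-1)\times(0)} \cdots R_{(r+1)\times(0)} R_{r\times(0)}$ is just a sequential composition of $2^m - r$ diagrams you already have. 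This is exactly how the paper represents $K$ and $K'$, and it dissolves the non-power-of-two difficulty entirely; with that substitution your argument becomes the paper's proof.
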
  
  \begin{remark}
In practice, it is not necessary to go from the beginning to the last step (standard form). You can stop at anywhere you know how to represent the corresponding matrix in ZX diagrams.
  \end{remark} 
   
 \begin{example}\label{2by2matrix}
Given an arbitrary $2\times 2$ matrix $A$, let
$A=\begin{pmatrix}
       a & b\\
       c & d
          \end{pmatrix}.
          $
If $A$ is a zero matrix, then $A=  %
	\beginpgfgraphicnamed{TikZit//0matrix2by2}
	\begin{tikzpicture}
	\begin{pgfonlayer}{nodelayer}
		\node [style=rn_phase] (0) at (-0.25, 0) {$\pi$};
		\node [style=none] (1) at (0, 0.5) {};
		\node [style=none] (2) at (0, -0.5) {};
	\end{pgfonlayer}
	\begin{pgfonlayer}{edgelayer}
		\draw (2.center) to (1.center);
	\end{pgfonlayer}
\end{tikzpicture}}%
	\endpgfgraphicnamed
$. Otherwise, we assume $a\neq 0$ (if $a=0$ then we can swap the location of $a$ with that of a non-zero element via elementary transformations).  Then
 \[  \begin{pmatrix}
         a & b\\
       c & d
                     \end{pmatrix} 
                     \overset{R_{0\times(\frac{1}{a})}}{\longrightarrow}
                     \begin{pmatrix}
        1 & \frac{b}{a}\\
       c & d
          \end{pmatrix}
         \overset{R_{0\times(-c)+ 1}}{\longrightarrow}
         \begin{pmatrix}
          1 & \frac{b}{a}\\
       0 & d- \frac{bc}{a}
             \end{pmatrix}    \overset{C_{0\times(-\frac{b}{a})+ 1}}{\longrightarrow}
             \begin{pmatrix}
        1 &0\\
       0 & d- \frac{bc}{a}
             \end{pmatrix}
             = %
	\beginpgfgraphicnamed{TikZit//simplified2by2mt}
	\begin{tikzpicture}
	\begin{pgfonlayer}{nodelayer}
		\node [style=gbox] (0) at (0, 0) {$k$};
		\node [style=none] (1) at (0, -0.5) {};
		\node [style=none] (2) at (0, 0.5) {};
	\end{pgfonlayer}
	\begin{pgfonlayer}{edgelayer}
		\draw (2.center) to (1.center);
	\end{pgfonlayer}
\end{tikzpicture}}%
	\endpgfgraphicnamed

\]
where $k=d- \frac{bc}{a}$.
Reverse the procedure, considering $i=0\times 2^0, a_0=0, j=1\times 2^0$, then we obtain the diagram for  the matrix $X$:
\[
A=%
	\beginpgfgraphicnamed{TikZit//2by2dm}
	\InputIfFileExists{TikZit//2by2dm.tikz}{}{\input{./figures/TikZit//2by2dm.tikz}}%
	\endpgfgraphicnamed

\]
Based on the representation of $A$ and the method from \cite{shaikhHowSumExponentiate2022}, the controlled-$A$ matrix can be given as follows:
\[
	\beginpgfgraphicnamed{controlledmatrix}
	\InputIfFileExists{controlledmatrix.tikz}{}{\input{./figures/controlledmatrix.tikz}}%
	\endpgfgraphicnamed

\]

\end{example} 

\section{Representation of matchgates}\label{match}
 In this section, we consider a very interesting class of two–qubit gates called matchgates (Jozsa style) \cite{jozsa2008matchgates}, which have been used for efficient computing on a classical computer \cite{jozsa2008matchgates} and universal quantum computation \cite{Jozsa_2009}. We want to use the representation method from the previous section to represent the matchgate in algebraic ZX diagrams.

      A matchgate $G(A,B)$ has the form
        \[
 G(A,B)= \begin{pmatrix}
    p & 0 & 0 & q     \\
    0 & w & x & 0   \\
    0 & y & z & 0   \\
 r & 0 & 0 & s \\
  \end{pmatrix}, \quad 
  A=   \begin{pmatrix}
    p  & q     \\
 r & s \\
  \end{pmatrix}, \quad 
  B=   \begin{pmatrix}
w & x \\
 y & z \\
  \end{pmatrix}, 
  \]
where $A, B$ are both in the special unitary group $SU(2)$.

 Then $G(A,B)$ can be represented in ZX as follows: 
      \[
	\beginpgfgraphicnamed{elementarysimplydoubled3}
	\InputIfFileExists{elementarysimplydoubled3.tikz}{}{\input{./figures/elementarysimplydoubled3.tikz}}%
	\endpgfgraphicnamed
 \quad =  \quad%
	\beginpgfgraphicnamed{elementarysimplydoubledsim}
	\InputIfFileExists{elementarysimplydoubledsim.tikz}{}{\input{./figures/elementarysimplydoubledsim.tikz}}%
	\endpgfgraphicnamed

      \] 
The details on how to get this representation of matchgates by elementary matrix diagrams can be found in the appendix. We note that  the matchgate in quantum circuit form and ZW diagrams has been shown by Niel de Beaudrap and Miriam Backens respectively at ZX discord, which we attached in the appendix. Our representation is more compact in the sense that we only have two controlled operations (the bottom part with triangle and $\sqrt{\frac{ry}{pw}}$ box and the top part with triangle and $\sqrt{\frac{xq}{pw}}$), while the circuit form (which has 2 controlled unitaries) has 6 controlled operations in general case (each controlled unitary is composed of three controlled rotations due to the Euler decomposition in general). The ZW form needs 4 W nodes with 1 input and 3 outputs, while our representation just needs 2 W nodes with 1 input and 2 outputs (showed in the appendix).

\section{Software Implementation}
We implemented Lemmas \ref{rowitimesatm}, \ref{rowitimeaplusjtm}, and \ref{swapmatrix} into \texttt{discopy} \cite{de2020discopy,toumi2022discopy} and numerically verified their correctness using tensor network contraction. Then we combine these lemmas with Gaussian elimination to automatically synthesise $2^m$ by $2^m$ matrices. We chose to extend the ZX module of \texttt{discopy} rather than \texttt{pyzx} because its representation for nodes do not make the distinction between input and output edges, and so cannot natively represent the non-symmetric triangle node. Here are the diagrams for Examples \ref{ex1}, \ref{ex2}, and \ref{ex3} produced by \texttt{discopy}.
\begin{center}
\includegraphics[height=5cm]{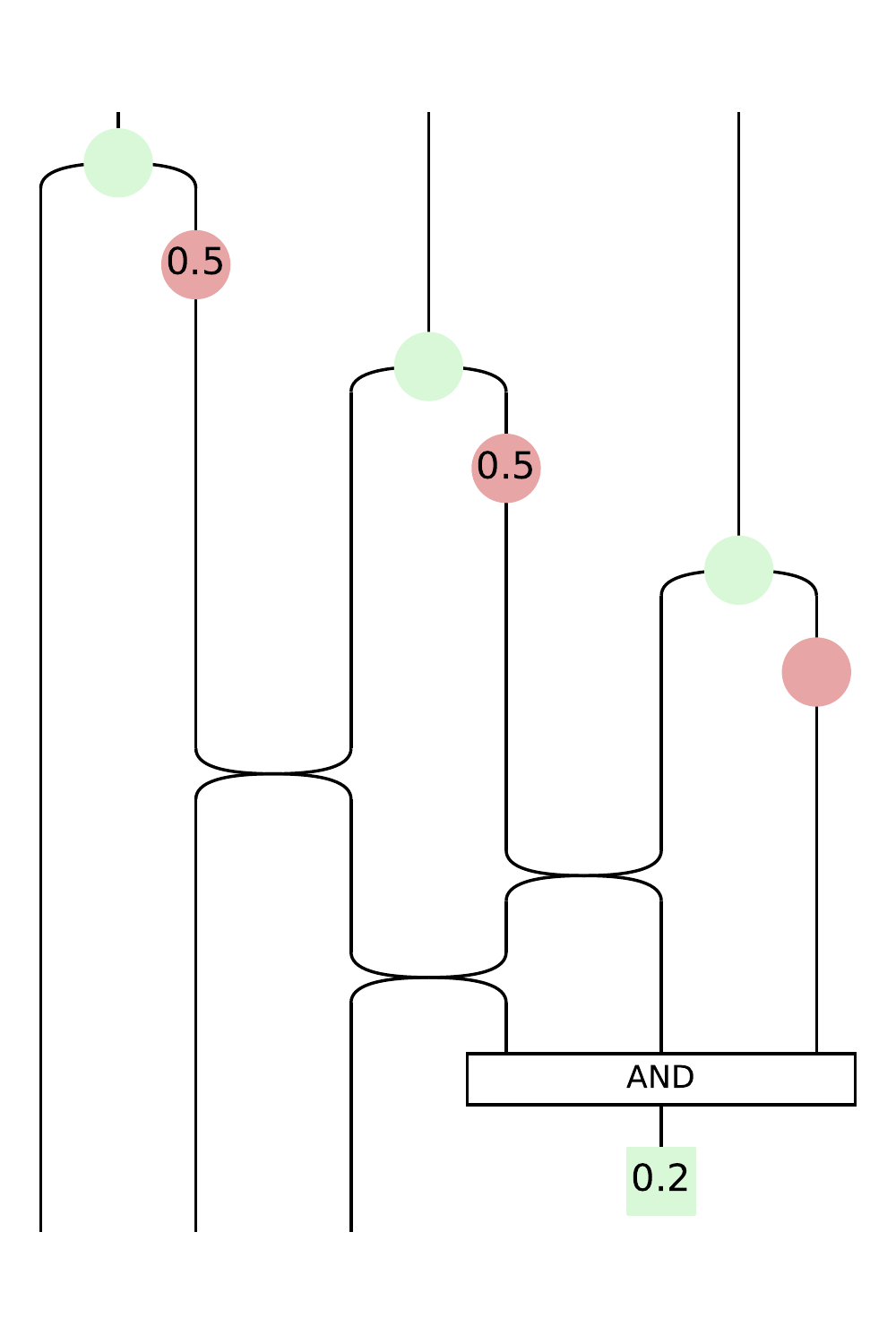} \hspace{1cm}
\includegraphics[height=5cm]{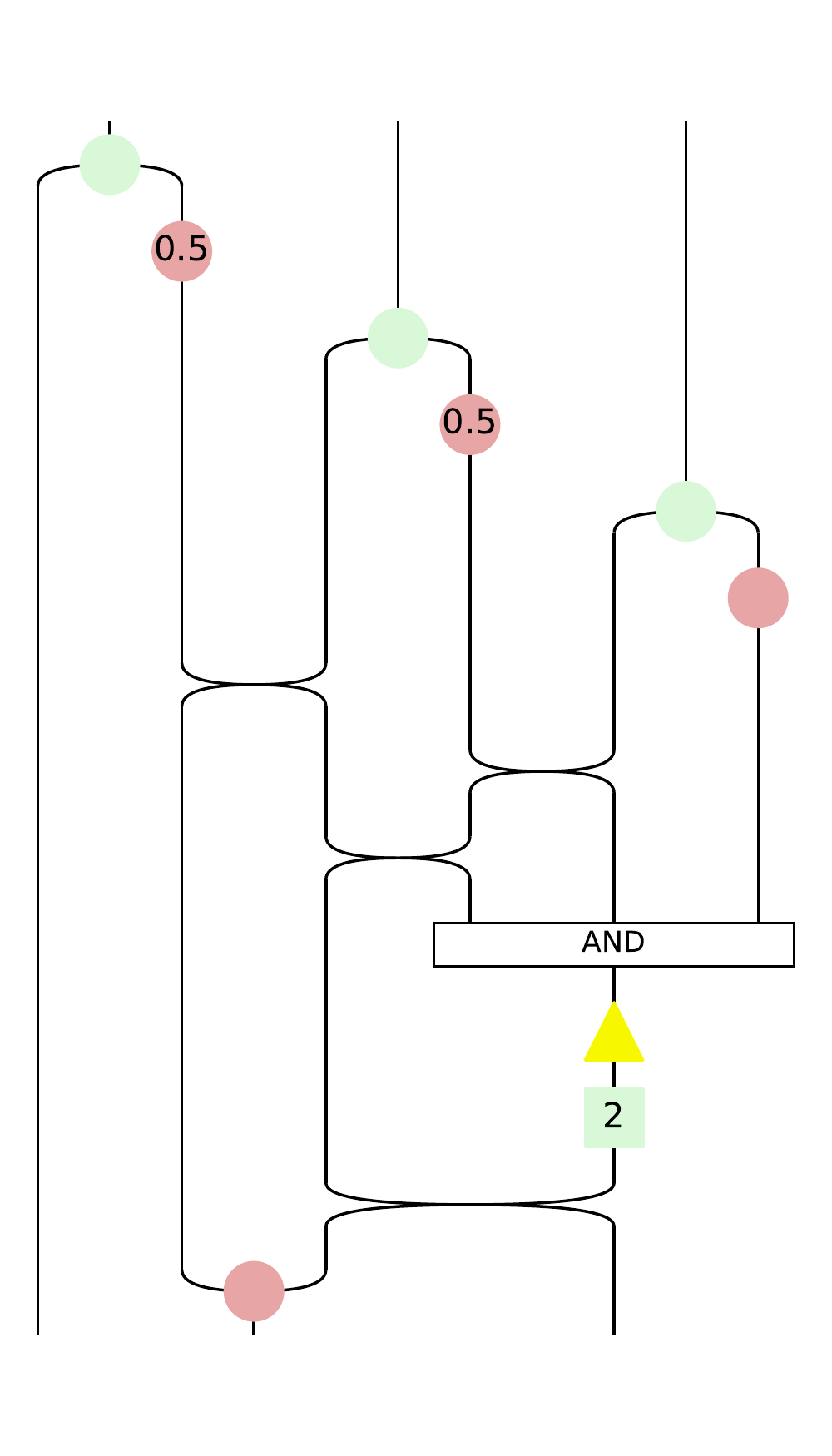} \hspace{1cm}
\includegraphics[height=5cm]{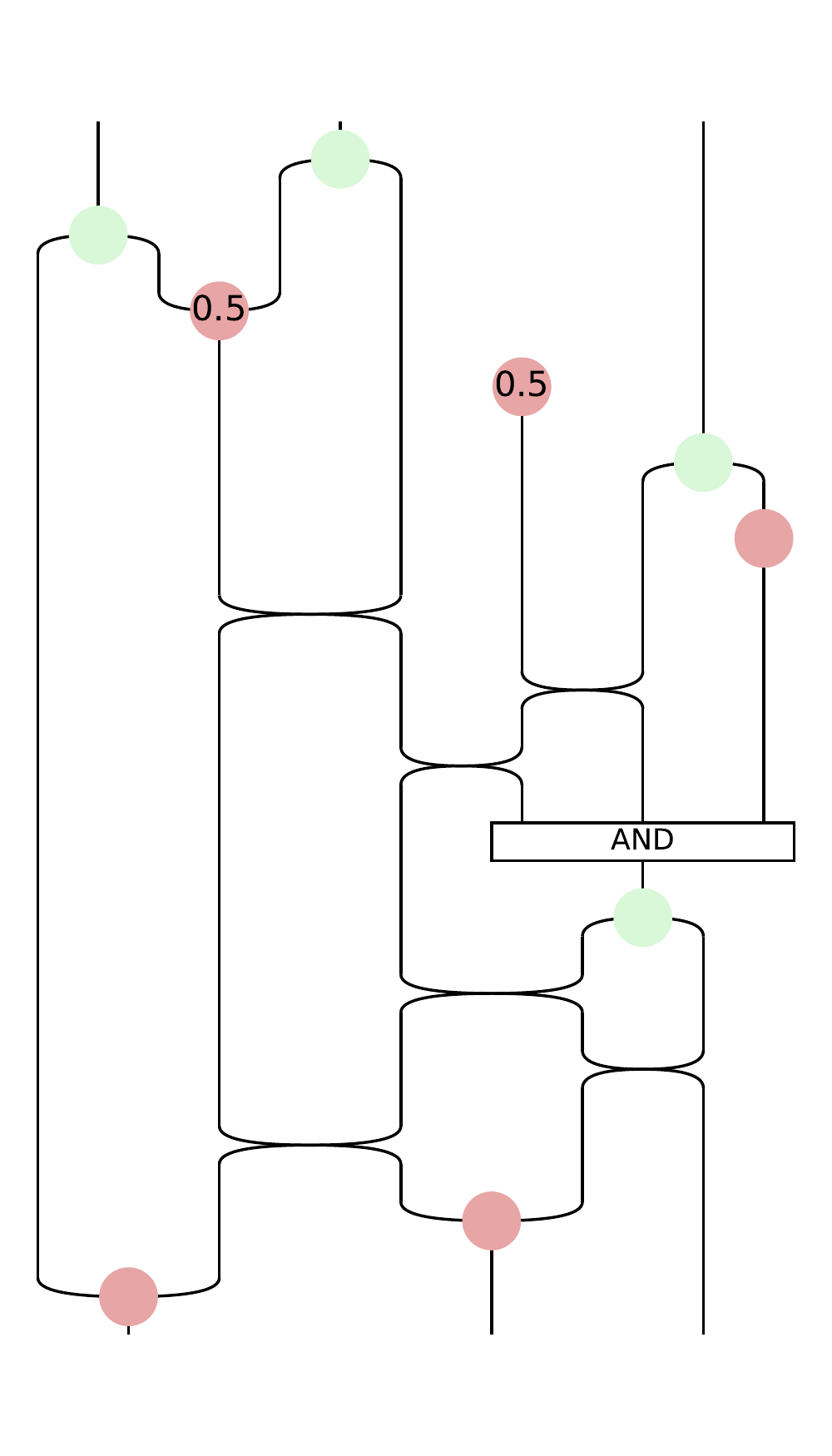}
\end{center}

\noindent
Note that \texttt{discopy}'s phases range from $[0, 1)$ rather than $[0, 2\pi)$.
The implementation can be found at \url{https://github.com/y-richie-y/qpl-represent-matrix}.

\section{Further work}  
In this paper, we first show how to represent elementary matrices of size  $2^m\times 2^m$ in ZX diagrams. Based on that, we depict arbitrary matrix of size $2^m\times 2^n$ via algebraic ZX-calculus. Also we implemented this representation in \texttt{discopy}. Furthermore, we show how this representation method could be used for representing matchgates.

Next, we would like to represent some useful matrix technologies like singular value decomposition (SVD), QR decomposition and lower–upper (LU) decomposition using algebraic ZX-calculus, which we hope could pave the way towards visualising  important matrix technologies deployed in machine learning. We would also like to combine our matrix representation tool written in \texttt{discopy} \cite{de2020discopy} with rewrite software such as \texttt{pyzx} \cite{kissinger2019pyzx}.

Although there is a normal form for arbitrary finite dimensional vectors \cite{qwangqufinite}, which means any finite matrix can be represented in higher dimensional ZX diagrams via map-state duality \cite{Coeckebk}, we still look for a simple representation for all kinds of elementary matrices of arbitrary size (some kinds of elementary matrices of size $d^m\times d^m$ for any positive integer $d$ have been given in \cite{qwangqufinite}), so that  any finite matrix can be represented in ZX diagrams in a similar way as shown in this paper.
  \section*{Acknowledgements}  
 The authors thank Konstantinos Meichanetzidis and Bob Coecke for insightful  comments. The first author would like to acknowledge the grant FQXi-RFP-CPW-2018 as well.    
 

\printbibliography

\section*{Appendix}
In this appendix, we include all the proofs of lemmas, theorems and propositions.

 \begin{proof}[Proof of \autoref{rowitimesatm}]
Note that $a_k\oplus \bar{a}_k=1$. Then we have
\[ 
	\beginpgfgraphicnamed{TikZit//rowitimesaprf}
	\InputIfFileExists{TikZit//rowitimesaprf.tikz}{}{\input{./figures/TikZit//rowitimesaprf.tikz}}%
	\endpgfgraphicnamed
=a e_{\sum_{k=0}^{m-1}a_k2^k}
\]
For any other inputting state $c_{m-1}\pi, \cdots c_k\pi, \cdots, c_0\pi$, there must be a $j$ such that $a_j\neq c_j$, i.e., $c_j=\bar{a}_j$, therefore one get 
\[ 
	\beginpgfgraphicnamed{TikZit//rowitimesaprf2}
	\InputIfFileExists{TikZit//rowitimesaprf2.tikz}{}{\input{./figures/TikZit//rowitimesaprf2.tikz}}%
	\endpgfgraphicnamed
= e_{\sum_{k=0}^{m-1}c_k2^k}
\]
 \end{proof}
 
 \begin{proof}[Proof of \autoref{rowitimeaplusjtm}]
 
The $i$-th column of $R_{i\times(a)+ j}$ should be $e_i+ae_j$, and the other columns are just of the form $e_k$.
First for the $i$-th column we have
\[ 
	\beginpgfgraphicnamed{TikZit//rowitimeaplusjprf}
	\InputIfFileExists{TikZit//rowitimeaplusjprf.tikz}{}{\input{./figures/TikZit//rowitimeaplusjprf.tikz}}%
	\endpgfgraphicnamed
\]
 \[=\ket{\underset{m-1}{a_{m-1}}\cdots \underset{j_s}{a_{j_s}}\cdots \underset{j_1}{a_{j_1}}\cdots \underset{0}{a_0}}+a\ket{\underset{m-1}{a_{m-1}}\cdots \underset{j_s}{\bar{a}_{j_s}}\cdots \underset{j_1}{\bar{a}_{j_1}}\cdots \underset{0}{a_0}}=e_i+ae_j
\]
For any other inputting state $c_{m-1}\pi, \cdots c_k\pi, \cdots, c_0\pi$, there must be a $j$ such that $a_j\neq c_j$, i.e., $c_j=\bar{a}_j$, therefore one get 
\[ 
	\beginpgfgraphicnamed{TikZit//rowitimeaplusjprf2}
	\InputIfFileExists{TikZit//rowitimeaplusjprf2.tikz}{}{\input{./figures/TikZit//rowitimeaplusjprf2.tikz}}%
	\endpgfgraphicnamed
= e_{\sum_{k=0}^{m-1}c_k2^k}
\]
  \end{proof}
  
  \begin{proof}[Proof of \autoref{swapmatrix}] 
The state $\ket{a_{m-1}\cdots a_{j_s}\cdots a_{j_k}\cdots a_{j_1}\cdots a_{0}}$ will be sent to $\ket{a_{m-1}\cdots \bar{a}_{j_s}\cdots \bar{a}_{j_k}\cdots \bar{a}_{j_1}\cdots a_{0}}$  by $R_{i\leftrightarrow j}$ and vice versa, while  the other states are remain unchanged under the action of $R_{i\leftrightarrow j}$.
First for the $i$-th column we have
\[ 
	\beginpgfgraphicnamed{TikZit//rowitiswapjprf}
	\InputIfFileExists{TikZit//rowitiswapjprf.tikz}{}{\input{./figures/TikZit//rowitiswapjprf.tikz}}%
	\endpgfgraphicnamed
\]
Similarly, we have
\[ 
	\beginpgfgraphicnamed{TikZit//rowitiswapjprf2}
	\InputIfFileExists{TikZit//rowitiswapjprf2.tikz}{}{\input{./figures/TikZit//rowitiswapjprf2.tikz}}%
	\endpgfgraphicnamed
\]
 
For any other input state $c_{m-1}\pi, \cdots c_k\pi, \cdots, c_0\pi$ (corresponding to $u$), if $u$ is different from $i$ and $j$ in some place that beyond the set $\{ j_1, \cdots, j_s\}$, then  there must be a $j$ such that $a_j\neq c_j$, i.e., $c_j=\bar{a}_j$, therefore one get 
\[ 
	\beginpgfgraphicnamed{TikZit//rowitiswapjprf3}
	\InputIfFileExists{TikZit//rowitiswapjprf3.tikz}{}{\input{./figures/TikZit//rowitiswapjprf3.tikz}}%
	\endpgfgraphicnamed

\]
If $u$ is different from $i$ and $j$ only in some places that belong to the set $\{ j_1, \cdots, j_s\}$, say $j_s$ and $j_k$ (since $i$ and $j$ are different in these places, the number of places where $u$ is different from $i$ and $j$  must be at least $2$). If $c_{j_k}=a_{j_k}$, then $c_{j_s}\neq a_{j_s}$ (otherwise $u$ won't be different from $i$ at places $j_s$ and $j_k$), i.e., $c_{j_s}= \bar{a}_{j_s}$. Therefore, $c_{j_k}\oplus c_{j_s}=a_{j_k}\oplus \bar{a}_{j_s}\neq a_{j_k}\oplus a_{j_s}$. Similarly,  If $c_{j_k}=\bar{a}_{j_k}$, then we still have $c_{j_k}\oplus c_{j_s}\neq a_{j_k}\oplus a_{j_s}$.  Now we claim that it is impossible that both  $c_{j_k}\oplus c_{j_1}=a_{j_k}\oplus a_{j_1}$ and $c_{j_s}\oplus c_{j_1}=a_{j_s}\oplus a_{j_1}$ hold. Otherwise, we could sum up (modulo 2) these two equalities and then get   $c_{j_k}\oplus c_{j_s}= a_{j_k}\oplus a_{j_s}$ which is a contradiction. Hence either  $c_{j_k}\oplus c_{j_1}=\overline{a_{j_k}\oplus a_{j_1}}$ or  $c_{j_s}\oplus c_{j_1}=\overline{a_{j_s}\oplus a_{j_1}}$. Assume $c_{j_s}\oplus c_{j_1}=\overline{a_{j_s}\oplus a_{j_1}}$, then we have
\[ 
	\beginpgfgraphicnamed{TikZit//rowitiswapjprf4}
	\InputIfFileExists{TikZit//rowitiswapjprf4.tikz}{}{\input{./figures/TikZit//rowitiswapjprf4.tikz}}%
	\endpgfgraphicnamed

\]
\end{proof}

\section*{Lemmas for \autoref{properties}}

  \begin{lemma}\cite{qwangnormalformbit}\label{pigatecopy}
	\beginpgfgraphicnamed{TikZit//pimultiplecp}
	\InputIfFileExists{TikZit//pimultiplecp.tikz}{}{\input{./figures/TikZit//pimultiplecp.tikz}}%
	\endpgfgraphicnamed
 
   \end{lemma} 
   
    \begin{lemma}\cite{duncan_graph_2009}\label{generalbialgebralm}
 $$ %
	\beginpgfgraphicnamed{TikZit//strongcomplementaryn}
	\InputIfFileExists{TikZit//strongcomplementaryn.tikz}{}{\input{./figures/TikZit//strongcomplementaryn.tikz}}%
	\endpgfgraphicnamed
$$
 \end{lemma}   
     \begin{lemma}\cite{qwangnormalformbit}\label{generalbialgebra}
	\beginpgfgraphicnamed{TikZit//generalBiA}
	\InputIfFileExists{TikZit//generalBiA.tikz}{}{\input{./figures/TikZit//generalBiA.tikz}}%
	\endpgfgraphicnamed
 
 \end{lemma}
 
  \begin{lemma}\cite{bobanthonywang}\label{generalbialgebra}
   For any $k\geq 0$, we have   
$$%
	\beginpgfgraphicnamed{TikZit//generalBiAvariant}
	\InputIfFileExists{TikZit//generalBiAvariant.tikz}{}{\input{./figures/TikZit//generalBiAvariant.tikz}}%
	\endpgfgraphicnamed
 $$  
    \end{lemma}  
   
    \begin{lemma}\label{rowitiswapjequivlm}
\[
R_{i\leftrightarrow j}= %
	\beginpgfgraphicnamed{TikZit//rowitiswapjequiv}
	\InputIfFileExists{TikZit//rowitiswapjequiv.tikz}{}{\input{./figures/TikZit//rowitiswapjequiv.tikz}}%
	\endpgfgraphicnamed

\]
where $b_s= \overline{(a_{j_1}\oplus a_{j_s})}, \cdots, b_k= \overline{(a_{j_1}\oplus a_{j_k})}.$
\end{lemma}
\begin{proof}
\[ 
	\beginpgfgraphicnamed{TikZit//rowitiswapjequivprf}
	\InputIfFileExists{TikZit//rowitiswapjequivprf.tikz}{}{\input{./figures/TikZit//rowitiswapjequivprf.tikz}}%
	\endpgfgraphicnamed

\]
\[ 
	\beginpgfgraphicnamed{TikZit//rowitiswapjequivprf2}
	\InputIfFileExists{TikZit//rowitiswapjequivprf2.tikz}{}{\input{./figures/TikZit//rowitiswapjequivprf2.tikz}}%
	\endpgfgraphicnamed

\]
where the last equality is obtained using the same method as previous steps.
\end{proof}

    \begin{lemma}\cite{qwangnormalformbit}\label{propadprimecro}
 $$ %
	\beginpgfgraphicnamed{TikZit//propaddprimecro}
	\InputIfFileExists{TikZit//propaddprimecro.tikz}{}{\input{./figures/TikZit//propaddprimecro.tikz}}%
	\endpgfgraphicnamed
$$
 \end{lemma} 
 
 \begin{lemma}\cite{qwangnormalformbit}\label{trianglehopflip}
 \begin{equation*}\label{trianglehopflipeq}
	\beginpgfgraphicnamed{TikZit//trianglehopfflip2}
	\InputIfFileExists{TikZit//trianglehopfflip2.tikz}{}{\input{./figures/TikZit//trianglehopfflip2.tikz}}%
	\endpgfgraphicnamed
 
   \end{equation*} 
    \end{lemma}

  \begin{proof}[Proof of \autoref{multiplicationinv}] 
 $R_{i\times(\frac{1}{a})}R_{i\times(a)}
=$
   \[ 
	\beginpgfgraphicnamed{TikZit//rowitimesainverseprf}
	\InputIfFileExists{TikZit//rowitimesainverseprf.tikz}{}{\input{./figures/TikZit//rowitimesainverseprf.tikz}}%
	\endpgfgraphicnamed

\]
$ R_{i\times(a)}R_{i\times(\frac{1}{a})}=I$ can be proved similarly.  
  \end{proof}

    \begin{proof}[Proof of \autoref{additioninv}] 
 $R_{i\times(-a)+ j}R_{i\times(a)+ j}
=$
   \[ 
	\beginpgfgraphicnamed{TikZit//rowitimeaplusjinverseprf}
	\InputIfFileExists{TikZit//rowitimeaplusjinverseprf.tikz}{}{\input{./figures/TikZit//rowitimeaplusjinverseprf.tikz}}%
	\endpgfgraphicnamed

\]
 \[ 
	\beginpgfgraphicnamed{TikZit//rowitimeaplusjinverseprfii}
	\InputIfFileExists{TikZit//rowitimeaplusjinverseprfii.tikz}{}{\input{./figures/TikZit//rowitimeaplusjinverseprfii.tikz}}%
	\endpgfgraphicnamed

\]
$ R_{i\times(a)+ j}R_{i\times(-a)+ j}=I$ can be proved similarly.  
  \end{proof}    

    \begin{proof}[Proof of \autoref{sawpinv}] 
 $R_{i\leftrightarrow j}R_{i\leftrightarrow j}
=$
   \[ 
	\beginpgfgraphicnamed{TikZit//rowswapinverseprf}
	\InputIfFileExists{TikZit//rowswapinverseprf.tikz}{}{\input{./figures/TikZit//rowswapinverseprf.tikz}}%
	\endpgfgraphicnamed

\]
\[ 
	\beginpgfgraphicnamed{TikZit//rowitiswapjequivprfii}
	\InputIfFileExists{TikZit//rowitiswapjequivprfii.tikz}{}{\input{./figures/TikZit//rowitiswapjequivprfii.tikz}}%
	\endpgfgraphicnamed

\]
  \end{proof}  

 \begin{proof}[Proof of \autoref{multiplytranp}] 
 $R_{i\times(a)}^T=$
   \[ 
	\beginpgfgraphicnamed{TikZit//rowtimestransposeprf}
	\InputIfFileExists{TikZit//rowtimestransposeprf.tikz}{}{\input{./figures/TikZit//rowtimestransposeprf.tikz}}%
	\endpgfgraphicnamed
\]
 \[
 =R_{i\times(a)}
\]
  \end{proof}  

 \begin{proof}[Proof of \autoref{additiontranp}] 
    Note that  $i=a_{m-1}2^{m-1}+\cdots + a_{j_s}2^{j_s}+\cdots+a_{j_1}2^{j_1}+\cdots+ a_02^0, a_k \in \{0, 1\}, \bar{a}_k=a_k\oplus 1, \oplus$ is the modulo 2 addition, $j=a_{m-1}2^{m-1}+\cdots + \bar{a}_{j_s}2^{j_s}+\cdots+\bar{a}_{j_1}2^{j_1}+\cdots+ a_02^0$, $0\leq j_1< \cdots < j_s \leq m-1, 1\leq s \leq m$, so $j$ is different from $i$ exactly in the $j_1, \cdots, j_s$ places in their binary expansions. 
 Then $R_{i\times(a)+ j}^T=$
   \[ 
	\beginpgfgraphicnamed{TikZit//rowtimesplustransposeprf}
	\InputIfFileExists{TikZit//rowtimesplustransposeprf.tikz}{}{\input{./figures/TikZit//rowtimesplustransposeprf.tikz}}%
	\endpgfgraphicnamed
\]
    \[ 
	\beginpgfgraphicnamed{TikZit//rowtimesplustransposeprf2}
	\InputIfFileExists{TikZit//rowtimesplustransposeprf2.tikz}{}{\input{./figures/TikZit//rowtimesplustransposeprf2.tikz}}%
	\endpgfgraphicnamed

 =R_{j\times(a)+ i}
\]
  \end{proof}   

 \begin{proof}[Proof of \autoref{sawptranp}] 
    Note that  $i=a_{m-1}2^{m-1}+\cdots + a_{j_s}2^{j_s}+\cdots+a_{j_1}2^{j_1}+\cdots+ a_02^0, a_k \in \{0, 1\}, \bar{a}_k=a_k\oplus 1, \oplus$ is the modulo 2 addition, $j=a_{m-1}2^{m-1}+\cdots + \bar{a}_{j_s}2^{j_s}+\cdots+\bar{a}_{j_1}2^{j_1}+\cdots+ a_02^0$, $0\leq j_1< \cdots < j_s \leq m-1, 1\leq s \leq m$, so $j$ is different from $i$ exactly in the $j_1, \cdots, j_s$ places in their binary expansions. 
 Then $R_{i\leftrightarrow j}^T=$
   \[ 
	\beginpgfgraphicnamed{TikZit//rowswitchtransposeprf}
	\InputIfFileExists{TikZit//rowswitchtransposeprf.tikz}{}{\input{./figures/TikZit//rowswitchtransposeprf.tikz}}%
	\endpgfgraphicnamed
\]
\[ =R_{i\leftrightarrow j}
\]
where $b_s= \overline{(a_{j_1}\oplus a_{j_s})}, \cdots, b_k= \overline{(a_{j_1}\oplus a_{j_k})}.$
  \end{proof}   
  
   \begin{proof}[Proof of \autoref{anymatpresen}] 
  If $m\leq n$, then $C=\begin{pmatrix}
       K & O
          \end{pmatrix}_{2^m\times 2^n}=(\underbrace{10\cdots0}_{2^{n-m}})\otimes K$, where the  $O$ in $C$ is a zero matrix of size $2^m\times (2^n-2^m)$, 
  \[
 K= \begin{pmatrix}
        E_r &  \cdots &O \\
       \vdots & \ddots & \vdots \\
         O& \cdots  & O
          \end{pmatrix}_{2^m\times 2^m}
  \]
  Since $(\underbrace{10\cdots0}_{2^{n-m}})= %
	\beginpgfgraphicnamed{TikZit//vector1dotso}
	\InputIfFileExists{TikZit//vector1dotso.tikz}{}{\input{./figures/TikZit//vector1dotso.tikz}}%
	\endpgfgraphicnamed
,$  and $K$ can be represented by sequential composition of $2^m-r$ row multiplication elementary matrices (multiplying 0) of size $2^m\times 2^m$ whose diagrammatic representation is shown in Theorem \ref{rowitimesatm} as
  $$
	\beginpgfgraphicnamed{TikZit//rowitimes0}
	\InputIfFileExists{TikZit//rowitimes0.tikz}{}{\input{./figures/TikZit//rowitimes0.tikz}}%
	\endpgfgraphicnamed

  $$
  
  Therefore, C can now be represented by string diagrams.
  
   If $m> n$, then $C=\begin{pmatrix}
       K^{\prime} \\
        O
          \end{pmatrix}_{2^m\times 2^n}=
              \begin{pmatrix}
       1 \\
        0\\
        \vdots\\
        0
          \end{pmatrix}_{2^{m-n}\times 1}\otimes  K^{\prime}$, where the  $O$ in $C$ is a zero matrix of size $(2^m-2^n)\times 2^n$, 
  \[
 K^{\prime}= \begin{pmatrix}
        E_r &  \cdots &O \\
       \vdots & \ddots & \vdots \\
         O& \cdots  & O
          \end{pmatrix}_{2^n\times 2^n}
  \]
  Since $  \begin{pmatrix}
       1 \\
        0\\
        \vdots\\
        0
          \end{pmatrix}_{2^{m-n}\times 1}= %
	\beginpgfgraphicnamed{TikZit//vector1dotso2}
	\InputIfFileExists{TikZit//vector1dotso2.tikz}{}{\input{./figures/TikZit//vector1dotso2.tikz}}%
	\endpgfgraphicnamed
,$  and $K^{\prime}$ can be represented by sequential composition of $2^n-r$ row multiplication (by 0) elementary matrices of size $2^n\times 2^n$ whose diagrammatic representation is shown in Theorem \ref{rowitimesatm}, $C$ can now be represented by diagrams.  
          
          To summarise, $C$ can always be represented by a ZX diagram,  also each elementary matrix can be represented by a ZX diagram, therefore, if we reverse the procedures from $A$ to $C$, we then get the diagrammatic representation of $A$.
   \end{proof}   
   
\section*{Details for \autoref{match}}   
Since
\[
 SU(2)= \left\{\begin{pmatrix}
a & -\overline{b} \\
 b & \overline{a} \\
  \end{pmatrix} : a, b \in \mathbb{C}, |a|^2+|b|^2=1\right\},
  \]
  we can assume that 
$a=\cos \theta e^{i\alpha}, b=\sin \theta e^{i\beta},  0 \leq \theta \leq \frac{\pi}{2}, \alpha, \beta \in [0, 2\pi)$.
 
 Return to the matchgate $G(A,B)$, we assume that 
 \[
  A=   \begin{pmatrix}
    p  & q     \\
 r & s \\
  \end{pmatrix} =   \begin{pmatrix}
    \cos \alpha e^{i\sigma}  &  -\sin\alpha e^{-i\tau}    \\
  \sin \alpha e^{i\tau}  &  \cos\alpha e^{-i\sigma}    \\
  \end{pmatrix} 
 \]
 
  \[
  B=   \begin{pmatrix}
  w & x \\
 y & z \\
  \end{pmatrix} =   \begin{pmatrix}
    \cos \beta e^{i\psi}  &  -\sin\beta e^{-i\phi}    \\
  \sin\beta e^{i\phi}  &  \cos\beta e^{-i\psi}    \\
  \end{pmatrix} 
 \]
where $0 \leq \alpha, \beta \leq \frac{\pi}{2}, \sigma, \tau, \psi, \phi \in [0, 2\pi)$.

Now we can represent a matchgate in ZX diagram.
First assume $\alpha \neq \frac{\pi}{2},  \beta\neq \frac{\pi}{2}$. Then
    \[
 G(A,B)= \begin{pmatrix}
    \cos \alpha e^{i\sigma}  & 0 & 0 & -\sin\alpha e^{-i\tau}     \\
    0 &  \cos \beta e^{i\psi}  &  -\sin\beta e^{-i\phi} & 0   \\
    0 &  \sin\beta e^{i\phi}  &  \cos\beta e^{-i\psi}   & 0   \\
 \sin \alpha e^{i\tau} & 0 & 0 & \cos\alpha e^{-i\sigma}  \\
  \end{pmatrix}    \overset{R_{0\times(-e^{i(\tau-\sigma)}\tan\alpha )+ 3}}{\longrightarrow}
  \]
 \[  \begin{pmatrix}
    \cos \alpha e^{i\sigma}  & 0 & 0 & -\sin\alpha e^{-i\tau}     \\
    0 &  \cos \beta e^{i\psi}  &  -\sin\beta e^{-i\phi} & 0   \\
    0 &  \sin\beta e^{i\phi}  &  \cos\beta e^{-i\psi}   & 0   \\
0 & 0 & 0 & \sec\alpha e^{-i\sigma}  \\
  \end{pmatrix}   \overset{C_{0\times(e^{-i(\tau+\sigma)}\tan\alpha )+ 3}}{\longrightarrow}
  \]
  \[
  \begin{pmatrix}
    \cos \alpha e^{i\sigma}  & 0 & 0 & 0    \\
    0 &  \cos \beta e^{i\psi}  &  -\sin\beta e^{-i\phi} & 0   \\
    0 &  \sin\beta e^{i\phi}  &  \cos\beta e^{-i\psi}   & 0   \\
 0 & 0 & 0 & \sec\alpha e^{-i\sigma}  \\
  \end{pmatrix}  \overset{R_{1\times(-e^{i(\phi-\psi)}\tan\beta )+ 2}}{\longrightarrow}
 \]
  \[
  \begin{pmatrix}
    \cos \alpha e^{i\sigma}  & 0 & 0 & 0    \\
    0 &  \cos \beta e^{i\psi}  &  -\sin\beta e^{-i\phi} & 0   \\
    0 &  0 &  \sec\beta e^{-i\psi}   & 0   \\
 0 & 0 & 0 & \sec\alpha e^{-i\sigma}  \\
  \end{pmatrix}  \overset{C_{1\times(e^{-i(\phi+\psi)}\tan\beta )+ 2}}{\longrightarrow}
 \]
  \[
  \begin{pmatrix}
    \cos \alpha e^{i\sigma}  & 0 & 0 & 0    \\
    0 &  \cos \beta e^{i\psi}  &  0 & 0   \\
    0 &  0 &  \sec\beta e^{-i\psi}   & 0   \\
 0 & 0 & 0 & \sec\alpha e^{-i\sigma}  \\
  \end{pmatrix} =  \begin{pmatrix}
    \cos \alpha e^{i\sigma}  & 0     \\
    0 &   \sec\beta e^{-i\psi}     \\
  \end{pmatrix} \otimes \begin{pmatrix}
   1 & 0     \\
    0 &   \sec\alpha\cos\beta e^{i(\psi-\sigma)}     \\
  \end{pmatrix} 
 \]
  \[
 = cos \alpha e^{i\sigma}\begin{pmatrix}
      1  & 0     \\
    0 &  \sec\alpha \sec\beta e^{-i(\psi+\sigma)}      \\
  \end{pmatrix} \otimes \begin{pmatrix}
   1 & 0     \\
    0 &   \sec\alpha\cos\beta e^{i(\psi-\sigma)}     \\
  \end{pmatrix} =  cos \alpha e^{i\sigma} X \otimes Y
 \]

Therefore,   
\[
 G(A,B)= R_{0\times(e^{i(\tau-\sigma)}\tan\alpha )+ 3}R_{1\times(e^{i(\phi-\psi)}\tan\beta )+ 2}(cos \alpha e^{i\sigma} X \otimes Y) C_{1\times(-e^{-i(\phi+\psi)}\tan\beta )+ 2}C_{0\times(-e^{-i(\tau+\sigma)}\tan\alpha )+ 3}
\]

 Let $a=e^{i(\tau-\sigma)}\tan\alpha=\frac{r}{p}, b=e^{i(\phi-\psi)}\tan\beta=\frac{y}{w}, c=-e^{-i(\tau+\sigma)}\tan\alpha=\frac{q}{p}, d=-e^{-i(\phi+\psi)}\tan\beta=\frac{x}{w}$. Then $R_{0\times(e^{i(\tau-\sigma)}\tan\alpha )+ 3}R_{1\times(e^{i(\phi-\psi)}\tan\beta )+ 2}=R_{1\times(e^{i(\phi-\psi)}\tan\beta )+ 2}R_{0\times(e^{i(\tau-\sigma)}\tan\alpha )+ 3}$ can be represented as follows:
 
    \[
	\beginpgfgraphicnamed{elementarysimply}
	\InputIfFileExists{elementarysimply.tikz}{}{\input{./figures/elementarysimply.tikz}}%
	\endpgfgraphicnamed

      \]    
     \[
	\beginpgfgraphicnamed{elementarysimply2}
	\InputIfFileExists{elementarysimply2.tikz}{}{\input{./figures/elementarysimply2.tikz}}%
	\endpgfgraphicnamed

      \]       
Let $e= \sec\alpha \sec\beta e^{-i(\psi+\sigma)}  =\frac{1}{pw},  f=\sec\alpha\cos\beta e^{i(\psi-\sigma)}= \frac{w}{p}.$  Then $ G(A,B)$ can be represented as
        \[
	\beginpgfgraphicnamed{elementarysimplydoubled2v2}
	\InputIfFileExists{elementarysimplydoubled2v2.tikz}{}{\input{./figures/elementarysimplydoubled2v2.tikz}}%
	\endpgfgraphicnamed

      \]   
  Also it can be checked by computational basis that 
      
  \[
	\beginpgfgraphicnamed{element2matricessim}
	\InputIfFileExists{element2matricessim.tikz}{}{\input{./figures/element2matricessim.tikz}}%
	\endpgfgraphicnamed

      \]   
   Therefore  $G(A,B)$ can be represented in ZX as: 
      \[
	\beginpgfgraphicnamed{elementarysimplydoubled3}
	\InputIfFileExists{elementarysimplydoubled3.tikz}{}{\input{./figures/elementarysimplydoubled3.tikz}}%
	\endpgfgraphicnamed
=  %
	\beginpgfgraphicnamed{elementarysimplydoubledsim}
	\InputIfFileExists{elementarysimplydoubledsim.tikz}{}{\input{./figures/elementarysimplydoubledsim.tikz}}%
	\endpgfgraphicnamed
= \quad %
	\beginpgfgraphicnamed{matchgatezxw}
	\InputIfFileExists{matchgatezxw.tikz}{}{\input{./figures/matchgatezxw.tikz}}%
	\endpgfgraphicnamed

      \] 
The circuit form of the matchgate:
\begin{center}
\includegraphics[height=5cm]{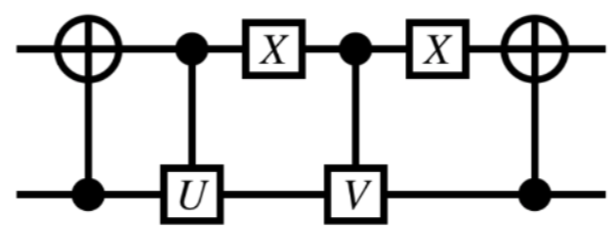} 
\end{center}
The ZW form of the matchgate:
\begin{center}
\includegraphics[height=5cm]{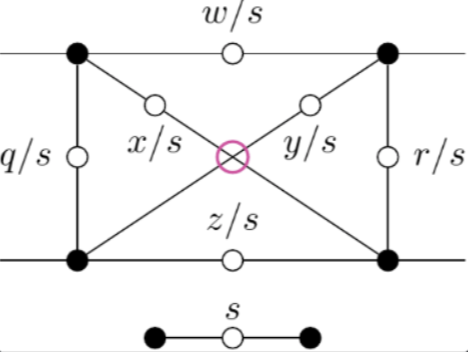} 
\end{center}
      
%



\end{document}